\lstdefinelanguage{pseudo}{morekeywords={init,with,or,if,then,else,fi,and,not,while,do,od,distinct,
    case, goto,local,algorithm, function, for, each, times, from, to,
    variables, procedure, recursive, return},
  morecomment=[l]{//}, morecomment=[s]{/*}{*/},
  mathescape=true,escapechar={@},
  basicstyle=\sffamily\small,
  commentstyle=\itshape\rmfamily\small,
  keywordstyle=\sffamily\bfseries\small
}
\definecolor{processblue}{cmyk}{0.96,0,0,0}
\newcommand{\be}{\begin{enumerate}}
\newcommand{\ee}{\end{enumerate}}
\newcommand{\bc}{\begin{center}}
\newcommand{\ec}{\end{center}}
\newcommand{\bi}{\begin{itemize}}
\newcommand{\ei}{\end{itemize}}
\newcommand{\act}{\xrightarrow}
\newtheorem{theorem}{Theorem}
\newtheorem{prop}[theorem]{Proposition}
\newtheorem{definition}[theorem]{Definition}
\newtheorem{example}{Example}
\newcommand\slice[2]{#1{\raise-.5ex\hbox{\ensuremath|}}_{#2}}
\newcommand{\defeq}{\stackrel{\scriptscriptstyle\text{def}}{=}}
\newcommand{\N}{\mathbb{N}}                    
\renewcommand{\vec}[1]{\bm{#1}}                
\newcommand{\set}[1]{\left\{#1\right\}}        
\newcommand{\multiset}[1]{\Lbag#1\Rbag}        
\renewcommand{\vec}[1]{\bm{#1}}                
\newcommand{\norm}[1]{\lVert#1\rVert}          
\newcommand{\support}[1]{\norm{#1}} 
\newcommand{\net}{\mathcal{N}}
\newcommand{\trans}[1]{\xrightarrow{#1}}       
\newcommand{\pre}{\mathit{pre}} 
\newcommand{\post}{\mathit{post}} 
\newcommand{\prestar}{\mathit{pre}^*}
\newcommand{\poststar}{\mathit{post}^*}
\renewcommand{\norm}[1]{\| {#1} \|}
\newcommand{\unorm}[1]{\|{#1}\|_u}
\newcommand{\lnorm}[1]{\|{#1}\|_l}
\newcommand{\sem}[1]{\llbracket{#1}\rrbracket} 
\newcommand{\cube}{\mathcal{C}}
\newcommand{\cC}{\Gamma}      
\newcommand{\cSet}{\mathcal{S}}  
\newcommand{\RBN}{\mathcal{R}}
\newcommand{\ASMS}{\mathcal{P}}
\newcommand{\oper}{\mathtt{op}}
\newcommand{\data}{\textit{data}}
\newcommand{\eqxrightarrow}[2]{%
  \mathop{%
    \vtop{%
      \m@th 
      \offinterlineskip 
      \ialign{%
        \hfil##\hfil\cr
        \rightarrowfill\cr
        \hphantom{$\scriptstyle\mskip8mu{#2}\mskip8mu$}\cr
        \vrule height0pt width 1.5em\cr
        $\scriptscriptstyle {#1}$\cr
      }%
    }%
  }\limits^{#2}%
}
\newcommand{\chana}[1]{\todo[color=green!30]{\small #1}}
\title{Reconfigurable Broadcast Networks and Asynchronous Shared-Memory Systems are Equivalent}
\author{A. R. Balasubramanian \qquad\qquad Chana Weil-Kennedy
\institute{Technical University of Munich \\ Munich, Germany \thanks{This project has received funding from the European Research Council (ERC) under the European Union's Horizon 2020 research and innovation programme under grant agreement No 787367 (PaVeS).}}
\email{bala.ayikudi@tum.de \quad\qquad chana.weilkennedy@in.tum.de}
}
\begin{document}
\maketitle

\begin{abstract}
We show the equivalence of two distributed computing models, namely
reconfigurable broadcast networks (RBN) and asynchronous shared-memory systems (ASMS),
that were introduced independently.
Both RBN and ASMS are systems in which a collection of anonymous, finite-state processes
run the same protocol. In RBN, the processes communicate by selective broadcast: 
a process can broadcast a message which is received by all of its neighbors, and the set
of neighbors of a process can change arbitrarily over time.
In ASMS, the processes communicate by shared memory: a process can either write to or read from 
a shared register.
Our main result is that RBN and ASMS can \emph{simulate} each other, i.e. they are equivalent with respect
to parameterized reachability, where we are given two (possibly infinite) sets of configurations $\cube$ and $\cube'$ 
defined by upper and lower bounds on the number of processes in each state and we would like to decide if some configuration in $\cube$ can reach
some configuration in $\cube'$.
Using this simulation equivalence, we transfer results of RBN to ASMS and vice versa.
Finally, we show that RBN and ASMS can simulate a third distributed model
called immediate observation (IO) nets. Moreover, for a slightly stronger notion of simulation (which is satisfied by all the simulations 
given in this paper), we show that IO nets cannot simulate RBN.
\end{abstract}

\section{Introduction}

In this paper, we consider three models of distributed computation, one in which
communication happens by (selective) broadcasts, another in which communication happens by means of a shared memory, and finally one in which
communication happens by observation. We first expand a bit more on these models, then describe our main results and finally derive some
consequences from these results.

The first model that we consider is \emph{reconfigurable broadcast networks} (RBN)\cite{AdHocNetworks, FSTTCS12}.
In this model, we have a collection of anonymous, finite-state processes executing the same protocol. Further, every process has a set of neighbors.
At each step, a process can broadcast a message which is then received by all of the processes in its neighborhood. 
The neighborhood topology is reconfigurable, meaning that the set of neighbors of a process can change arbitrarily between two steps.
Parameterized verification of RBN
aims to prove that a property is correct, irrespective
of the number of participating processes.
Dually, it attempts to find an execution of some population of processes
for which a property is violated.  Within this context, the complexity of different variants of (parameterized) reachability and repeated coverability have been studied for RBN \cite{AdHocNetworks, FSTTCS12, Liveness}. 
Moreover, many extensions of RBN with clocks, registers and probabilities have been proposed and studied, mainly within the perspective of parameterized verification \cite{register,prob,probtime}.

The second model that we consider is a formal model of \emph{asynchronous shared-memory systems} (ASMS)\cite{ModelCheckingSMS, ParamLiveness, ICALPPatricia}.
In this model, we have a collection of anonymous, finite-state processes executing the same protocol, and a single register which all processes can access 
to perform a read/write operation. The set of values that can be stored in this register is finite. 
No locks onto the register are allowed and so no process can perform a sequence of atomic operations whilst preventing other processes from accessing the register.
Similar to RBN, major questions of interest in ASMS are those pertaining to parameterized verification, i.e. finding bad executions over
some population of processes.
The complexity of some (parameterized) reachability and model-checking questions for ASMS have been explored in a series of papers
\cite{ModelCheckingSMS, JACM16, ParamLiveness}. Further extensions of this model with leaders, stacks, etc. have also been studied \cite{JACM16, FineGrained, SafetyAlmostAlways, ModelCheckingPushdown}. Finally, \cite{ICALPPatricia} considers ASMS in the setting in which a stochastic scheduler
picks a process (uniformly at random) at each step to be executed, and under this setting studies the question of 
whether a given state can be reached by some process almost-surely, i.e., with probability 1. 

The third model that we consider is \emph{immediate observation Petri nets} (IO nets) \cite{EsparzaGMW18,EsparzaRW19}, 
which were introduced with motivations from the field of population protocols \cite{Comp-Power-Pop-Prot,First-Pop-Prot}.
Intuitively, in this model, we have a collection of anonymous, finite-state processes executing the same protocol. The only communication allowed
between processes is \emph{observation}, i.e., a process can only observe that another process is at some other state, and based on this 
observation can execute a step. The process being observed cannot detect if some process is observing it. Motivated by application to population protocols,
the authors of \cite{EsparzaGMW18, EsparzaRW19} study parameterized reachability questions for IO nets.

In this paper, we show that RBN and ASMS can \emph{simulate} each other, with respect to (parameterized) reachability.
Roughly speaking, we show that any instance of a parameterized reachability question for RBN can be efficiently translated to an instance
of parameterized reachability for ASMS and vice versa. More specifically, we consider the question of \emph{cube-reachability}. 
In the cube-reachability question, we are given an instance of a model (which can be either an RBN, an ASMS or an IO net) and two sets of configurations $\cube,\cube'$,
each of them defined by lower and upper bounds on the number of processes in each state.  (The upper bounds on some states might be $\infty$, which means
that we allow arbitrary number of processes in that state). We would then like to decide if there is a configuration in $\cube$ which can 
reach a configuration in $\cube'$. As we shall explain in the next section, the cube-reachability question covers parameterized reachability and coverability problems,
parameterized reachability problems with leaders, and allows for a uniform transfer of results between the models that we study in this paper.

Our main result is that the cube-reachability questions for RBN and ASMS are polynomial-time equivalent to each other. This result, along with the constructions
achieving this result, enable us to translate various parameterized reachability results from RBN to ASMS and vice versa. First, we show that a special case of cube-reachability, called unbounded initial cube reachability,
is \textsf{PSPACE}-complete for ASMS, by using our reduction and already existing similar results on RBN. Then, we introduce the model of RBN-leader protocols and 
use already existing results on ASMS-leader protocols to prove that the RBN-leader reachability problem is \textsf{NP}-complete. Finally, we show that
the almost-sure coverability problem for RBN is decidable in \textsf{EXPSPACE} by translating the analogous result for ASMS~\cite{ICALPPatricia}.

Additionally, we show that the cube-reachability problem for IO nets reduces to the cube-reachability problem for RBN, leading to a transfer of results from RBN to IO nets. For the other direction, we actually show an impossibility result.
We define a stronger form of reduction for the cube-reachability problem and we notice that the 
reductions given in this paper all satisfy this stronger property. Then, using results from IO net theory, 
we show that there can be no reduction from the cube-reachability problem for RBN to the cube-reachability problem for IO nets which satisfies this stronger property.
We leave open the problem of whether there can exist other reductions from RBN to IO nets.

The rest of the paper is organized as follows: In Section 2, we present some preliminary definitions and notations, then in Section 3, we describe RBN and ASMS.
Section 4 proves our main result that RBN and ASMS can simulate each other. Section 5 presents some transfer of results between RBN and ASMS. 
In Section 6, we introduce IO nets, show that they can be simulated by RBN, and prove that the other direction is not true for a stronger form of simulation.
For space reasons, all missing proofs are relegated to the appendix.

\section{Preliminaries}


\paragraph*{Multisets.}
A \emph{multiset} on a finite set \(E\) is a mapping \(C \colon E \rightarrow \N\), i.e. for any $e\in E$, \(C(e)\) denotes the number of occurrences of element \(e\) in \(C\).
We let $\mathbb{M}(E)$ denote the set of all multisets on $E$.
Let $\multiset{e_1,\ldots,e_n}$ denote the multiset $C$ such that $C(e)=|\{j\mid e_j=e\}|$.
We sometimes write multisets using set-like notation. 
For example, $\multiset{2 \cdot a,b}$ and $\multiset{a,a,b}$ denote the same multiset.
Given $e \in E$, we denote by $\vec{e}$ the multiset consisting of one occurrence of element
$e$, that is $\multiset{e}$. 
Operations on \(\N\) like addition or comparison are extended to multisets by defining them component wise on each element of \(E\).
Subtraction is allowed as long as each component stays non-negative.
Given a multiset $C$ on $E$  and a multiset $C'$ on $E'$ 
such that $E \cap E'=\emptyset$, we denote by 
$C \cdot C'$ the multiset on $E \cup E'$ equal
to $C$ on $E$ and  to $C'$ on $E'$. 
We call $|C| \defeq\sum_{e\in E} C(e)$ the \emph{size} of $C$, and $\support{C} \defeq \{ e \mid C(e)>0 \}$ the \emph{support} of $C$. 
Given \(E'\subseteq E\) define \(C(E')\defeq\sum_{e\in E'} C(e)\).

\paragraph*{Cubes.}
Given a finite set $Q$, a \emph{cube} $\cube$ is a subset of $\mathbb{M}(Q)$ described 
by a lower bound $L \colon Q \rightarrow \N$ 
and an upper bound $U \colon Q \rightarrow \N \cup \{\infty\}$ 
such that $\cube = \{C : L \le C \le U\}$.
Abusing notation, we identify the set $\cube$ with the pair $(L,U)$.
All the results in this paper are true irrespective of whether the constants
are encoded in unary or binary.

\paragraph*{Reachability.}
Let $\mathcal{T} = (S,\rightarrow)$ be a transition system where $S$ is a set of \emph{configurations} and $\rightarrow$ is a binary relation on $S$
called the transition (or) step
relation.
Given configurations $C$ and $C'$, 
we say $C'$ is \emph{reachable} from $C$ if $C \trans{*} C'$
, where $\trans{*}$ denotes the reflexive-transitive closure of the step relation.
Let $\cSet$ be a set of configurations. 
The \emph{predecessor set} of $\cSet$ is 
$\pre^*_\mathcal{T}(\cSet) \defeq \{ C' | \exists C \in \cSet \, . \, C' \xrightarrow{*} C \}$, and the \emph{successor set} of $\cSet$ is
$\post^*_\mathcal{T}(\cSet) \defeq \{ C | \exists C' \in \cSet \, . \, C' \xrightarrow{*} C \}$.
The \emph{immediate predecessor set} of $\cSet$ is 
$\pre_\mathcal{T}(\cSet) \defeq \{ C' | \exists C \in \cSet \, . \, C' \rightarrow C \}$, and the \emph{immediate successor set} of $\cSet$ is
$\post_\mathcal{T}(\cSet) \defeq \{ C | \exists C' \in \cSet \, . \, C' \rightarrow C \}$.
When it is clear from the context, we will drop the $\mathcal{T}$ subscript.
The \emph{reachability} problem consists of deciding, 
given a system $\mathcal{T}$ and configurations $C,C'$, 
whether $C'$ is reachable from $C$ in $\mathcal{T}$.

\paragraph*{Cube reachability.}

If $\mathcal{T}$ is a transition system whose set of configurations is the set of all multisets on
a finite set $Q$, then the reachability problem can be generalized to the \emph{cube-reachability} problem which consists of deciding, given $\mathcal{T}$ and two cubes $\cube, \cube'$ over $Q$,
whether there exists configurations  $C \in \cube$ and $C' \in \cube'$ such that $C'$ is reachable from $C$ in $\mathcal{T}$.
If this is the case, we say $\cube'$ is reachable from $\cube$.

As mentioned before, the cube-reachability problem generalizes the reachability problem. It also generalizes the coverability problem : Given a configuration $C$ and a state $q \in Q$,
decide if there exists $C'$ such that $C \act{*} C'$ and $C'(q) \ge 1$. It can also talk about \emph{parameterized reachability} problems, for e.g., 
given two finite sets of states $I$ and $F$, do there exist configurations $C$ and $C'$ such that
$\support{C} \subseteq I, \support{C'} \subseteq F$ and $C \act{*} C'$.
Further, the cube-reachability problem is important in the model of immediate observation Petri nets (IO nets). This model was introduced to study immediate observation population protocols \cite{EsparzaGMW18, EsparzaRW19}, and the correctness problem for these protocols is solved using  cube-reachability in IO nets.
Additionally, as we will see in Section~\ref{subsec:leader}, the cube-reachability problem is a generalization of the so-called \emph{leader reachability problem} and allows for an elegant way to transfer results between the models that we study in this paper.


\section{Two Models}

\subsection{Reconfigurable Broadcast Networks}


Reconfigurable broadcast networks (RBN)~\cite{AdHocNetworks,FSTTCS12} are networks comprising an arbitrary number of finite-state, anonymous processes and a communication topology specifying the presence or absence of communication links
between different processes. During a step, a process can broadcast a message which is immediately received by all of its neighbours. 
The process and its neighbours then update their states according to a transition relation. Before each such broadcast step, the communication topology can reconfigure in an arbitrary manner.
Since our main focus in this paper is regarding reachability in this model, we can forget the communication topology and simply define the semantics of an RBN directly in 
terms of collections of processes. 

\begin{definition}
\label{def:rbn}
A \emph{reconfigurable broadcast network} is a tuple 
$\RBN = (Q, \Sigma,\delta)$ 
where $Q$ is a finite set of states,
$\Sigma$ is a finite alphabet 
and $\delta \subseteq Q \times \set{!a,?a \ | \ a \in \Sigma} \times Q$ is the transition relation.
\end{definition}

\begin{figure}
\begin{center}
    \begin{tikzpicture}[->, thick]
      \node[place] (a1) {$a_1$};
      \node[place] (b1) [below =0.5 of a1] {$b_1$};
      \node[place] (c1) [below =0.5 of b1] {$c_1$};
      \node[place] (a2) [right =of a1] {$a_2$};
      \node[place] (b2) [below =0.5 of a2] {$b_2$};
      \node[place] (c2) [below =0.5 of b2] {$c_2$};
      \node[place] (a3) [right =of a2] {$a_3$};
      \node[place] (b3) [below =0.5 of a3] {$b_3$};
      \node[place, ultra thick] (c3) [below =0.5 of b3] {$c_3$};
      \node[place] (tok) [right =2 of a3] {$tok$};
      \node[place] (sent) [right =2 of c3] {$sent$};
      
      \path[->]
      (tok) edge node[right] {$!1$} (sent)
      (a1) edge node[right] {$?1$} (b1)
      (b1) edge node[right] {$?1$} (c1)
      (c1) edge[bend left=40] node[above left] {$!2$} (a1)
      (a2) edge node[right] {$?2$} (b2)
      (b2) edge node[right] {$?2$} (c2)
      (c2) edge[bend left=40] node[above left] {$!3$} (a2)
      (a3) edge node[right] {$?3$} (b3)
      (b3) edge node[right] {$?3$} (c3)
      (c3) edge[bend left=40] node[above left] {$!4$} (a3)
      ;
    \end{tikzpicture}
\end{center}
\caption{An RBN simulating a counter to $2^3$.}
\label{fig:rbn}
\end{figure}

If $(p,!a,q)$ (resp. $(p,?a,q)$) is a transition in $\delta$, we will denote it by $p \act{!a} q$ (resp. $p \act{?a} q$).
A \emph{configuration} $C$ of an RBN $\RBN$ is a multiset over $Q$, which
intuitively counts the number of processes in each state. 
Given a letter $a\in \Sigma$ and two configurations $C$ and $C'$
we say  that there is a \emph{step} $C \trans{a} C'$
if there exists a multiset $\multiset{t, t_1, \ldots, t_k}$ of $\delta$ for some $k\ge 0$
satisfying the following: $t=p \trans{!a} q$, each $t_i =p_i \trans{?a} q_i$,
$C \ge \vec{p} + \sum_i \vec{p_i}$, and $C' = C - \vec{p} - \sum_i \vec{p_i} + \vec{q} + \sum_i \vec{q_i}$. 
We sometimes write this as $C \trans{t+t_1,\ldots, t_n} C'$, 
and intuitively it means that a process at the state $p$ broadcasts the message $a$ and moves to $q$,
and for each $1 \le i \le k$, there is a process at the state $p_i$ which receives this message and moves to $q_i$.
We denote 
by $\trans{*}$ the reflexive and transitive closure of the step relation. 
A \emph{run} is then a sequence of steps.

\begin{example}
\label{ex:rbn}
Consider the RBN of Figure \ref{fig:rbn}, 
with set of states $\set{tok, sent} \cup \set{a_i,b_i,c_i | 1 \le i \le 3}$.
It is inspired by a similar example described in Section 5.1 of \cite{ICALPPatricia}.
Let $\cube_0$ be the cube which puts exactly
one process in each $a_i$, an arbitrary number of processes in $tok$
and $0$ processes elsewhere. 
That is, $\cube_0=(L,U)$ such that $L(a_i)=U(a_i)=1$ for all $i$,
$L(tok)=0$ and $U(tok)=\infty$, and $L(q)=U(q)=0$ for all other states $q$.
Let $\cube_f$ be the cube which puts at least one process in $c_3$ and an arbitrary number elsewhere.
Suppose some configuration in $\cube_0$ reaches some configuration in $\cube_f$.
By construction, for a process to reach $c_3$ it must start in $a_3$ and receive $3$ twice.
For a process to broadcast $3$ it must start in $a_2$ and receive $2$ twice, and for a process to broadcast $2$ it must start in $a_1$ and receive $1$ twice.
So a run from a configuration of $\cube_0$ to a configuration of $\cube_f$
must contain at least $2^3$ broadcasts of $1$.
Since the only way to broadcast $1$ is for a process
to go from $tok$ to $sent$,
 there must be at least $2^3$ processes in $tok$ in 
 the initial configuration of $\cube_0$.
\end{example}
\subsection{Asynchronous Shared-Memory Systems}
Asynchronous shared-memory systems (ASMS)~\cite{JACM16,ModelCheckingSMS}
consist of an arbitrary number of finite-state, anonymous processes.
These processes can communicate with each other by means of a single shared
register, to which they can either write a value or from which they can read a value.

\begin{definition}
	An asynchronous shared-memory system (ASMS) is a tuple $\ASMS = (Q,\Sigma,\delta)$ where
	$Q$ is a finite set of states, $\Sigma$ is a finite alphabet,
	and $\delta \subseteq Q \times \{R,W\} \times \Sigma \times Q$ is the set of transitions.
	Here $R$ stands for \emph{read}, and
	$W$ stands for \emph{write}.
\end{definition}

We use $p \trans{R(d)} q$ (resp. $p \trans{W(d)} q$) to denote
that $(p,R,d,q) \in \delta$ (resp. $(p,W,d,q) \in \delta$). 
The semantics of an ASMS is given by means of \emph{configurations}.
A configuration $C$ of an ASMS is a multiset over $Q \cup \Sigma$
such that $\sum_{d \in \Sigma} C(d) = 1$, i.e., $C$ contains
exactly one element from the set $\Sigma$. 
Hence, we sometimes denote a configuration $C$ as $(M,d)$
where $M$ is a multiset over $Q$ (which counts the number of processes
in each state) and $d \in \Sigma$ (which denotes the content of the shared register).
The value $d$ will be denoted by $\data(C)$.

A \emph{step} between configurations $C = (M,d)$ and $C' = (M',d')$ exists
if there is $t = (p,\oper,d'',q) \in \delta$ such that
$M(p) > 0$, $M' = M - \vec{p} + \vec{q}$
and either $\oper = R$ and $d = d' = d''$ or 
$\oper = W$ and $d' = d''$. If such a step exists,
we denote it by $C \trans{t} C'$ and we let $\trans{*}$ denote
the reflexive transitive closure of the step relation. 
A \emph{run} is then a sequence of steps.
Given a sequence of transitions $\sigma = t_1,\dots,t_n$, we 
sometimes use $C \act{\sigma} C'$ to denote that there is 
a run of the form $C \act{t_1} C_1 \act{t_2} \dots C_{n-1} \act{t_n} C'$.

A cube $\cube = (L,U)$ of an ASMS $\ASMS = (Q,\Sigma,\delta)$ is defined to be a cube over $Q \cup \Sigma$
satisfying the following property : There exists $d \in \Sigma$ such that $L(d) = U(d) = 1$ and $L(d') = U(d') = 0$ for every other $d'$.
Hence, we sometimes denote a cube $\cube$ as $(L,U,d)$ where $(L,U)$ is a cube over $Q$ and $d \in \Sigma$.
Membership of a configuration $C$ in a cube $\cube$ is then defined in a straightforward manner.
The cube-reachability problem for ASMS is then to decide, given $\ASMS$ and two cubes $\cube, \cube'$
whether $\cube$ can reach $\cube'$, i.e., whether there are configurations $C \in \cube, C' \in \cube'$
such that $C \act{*} C'$. 

\begin{figure}
	\begin{center}
    \begin{tikzpicture}[->, thick]
      \node[place] (a1) {$a_1$};
	  \node[place] (a2) [right =of a1] {$a_2$};
	  \node[place] (a3) [right =of a2] {$a_3$};
      \node[place] (a4) [right =of a3] {$a_4$};

      \node[place] (b3) [below =0.75 of a1] {$b_3$};      
      \node[place] (b2) [left =of b3] {$b_2$};
      \node[place] (b1) [left =of b2] {$b_1$};
      
      \node[place] (c1) [below =0.75 of a4] {$c_1$};  
      \node[place] (c2) [right = of c1] {$c_2$};
	  \node[place] (c3) [right = of c2] {$c_3$};

      
      \path[->]
      (a1) edge[bend left = 40] node[above] {$W(1)$} (a2)
      (a1) edge[bend right = 40] node[below] {$W(2)$} (a2)
      (a2) edge node[above] {$R(3)$} (a3)
      (a3) edge node[above] {$R(4)$} (a4)
      
      (b1) edge node[above] {$R(1)$} (b2)
      (b2) edge node[above] {$W(3)$} (b3)
      
      (c1) edge node[above] {$R(2)$} (c2)
      (c2) edge node[above] {$W(4)$} (c3)
      ;
    \end{tikzpicture}
\end{center}
	\caption{An example of an ASMS}
	\label{fig:asms}
\end{figure}

\begin{example}
\label{ex:asms}
Consider the ASMS of Figure~\ref{fig:asms} where the alphabet is $\{\#,1,2,3,4\}$. Let $\cube$ be the cube 
which puts exactly one process in $a_1$, arbitrary number of processes in $b_1$ and $c_1$ and exactly 0 processes elsewhere.
Let $\cube'$ be the cube where $\cube'$ puts at least one process in $a_4$ and arbitrary number of processes elsewhere.
It can be verified that the cube $\cube$ cannot reach $\cube'$ for the following reason: Since there
is only one process in $a_1$ in $\cube$, it follows that this process can either write 1 or 2, but not both.
Hence, either processes from $b_1$ can move into $b_2$ to write 3 or processes from $c_1$ can move into $c_2$ to write 4, but 
both cannot happen. It then follows that it is impossible to read both 3 and 4, and so the state $a_4$ cannot be reached.
\end{example}

\section{RBN and ASMS are Cube-Reachability Equivalent}\label{sec:simulations}


Throughout this paper, whenever we talk about one model simulating another model, we mean that the cube-reachability problem for the second model
can be reduced in polynomial time to the cube-reachability problem for the first model.
In this section, we prove our main result that RBN and ASMS can \emph{simulate} each other.
As we will see in the next section, this simulation will allow us to transfer results from RBN to ASMS and vice versa.


\subsection{ASMS Simulate RBN}\label{subsec:ASMS-RBN}
\textbf{Construction} \
Let $\RBN = (Q_\RBN, \Sigma_\RBN,\delta_\RBN)$ be an RBN. 
We construct an ASMS that simulates $\RBN$.
The register value is used to store which message can be received, additional states are used to represent that a broadcast is in progress, and a fresh register value is written when the simulation of a broadcast is over.
\chana{high level intuition added}
For every $a \in \Sigma_\RBN$, we let $\delta_\RBN^{!a}$ (resp. $\delta_\RBN^{?a}$) be the subset 
of the transitions in $\delta_\RBN$ that broadcast (resp. receive) the letter $a$.
Let $\ASMS = (Q_\ASMS,\Sigma_\ASMS,\delta_\ASMS)$ be the following ASMS: 
The set of states $Q_\ASMS$ is $Q_\RBN \cup I$ with $I=\set{[p,a,p'] : (p,?a,p') \in \delta \text{ or } (p,!a,p') \in \delta}$, where $I$ stands for intermediary.
The alphabet $\Sigma_\ASMS$ is $\Sigma_\RBN \cup \{\#\}$ where $\#$ is a letter which is not in  $\Sigma$.
The transition relation $\delta_\ASMS$ is such that 
for every $t = (q,!a,q') \in \delta_\RBN$ 
there are transitions $\hat{t} := q \trans{W(a)} [q,a,q']$ and $t^\# := [q,a,q'] \trans{W(\#)} q'$ in $\delta_\ASMS$, and
for every $t = (q,?a,q') \in \delta_\RBN$ 
there are transitions $\hat{t} := q \trans{R(a)} [q,a,q']$ and $t^\# := [q,a,q'] \trans{W(\#)} q'$ in $\delta_\ASMS$,
as represented in Figure \ref{fig:rbn-to-sms}. \chana{fig 3 added and mentioned}

\begin{figure}[h]
\begin{center}
    \begin{tikzpicture}[->, thick, node distance=1.25cm] 
      \node[place] (q) {$q$}; 
      \node[place,inner sep=1pt] (qaq') [right =of q] {$q,a,q'$};
      \node[place] (q') [right =of qaq'] {$q'$};
      \node[place] (p) [right =of q'] {$p$};
      \node[place,inner sep=1pt] (pap') [right =of p] {$p,a,p'$};
      \node[place] (p') [right =of pap'] {$p'$}; 
      
      \path[->]
      (q) edge node[above] {$W(a)$} (qaq')
      (p) edge node[above] {$R(a)$} (pap')
      (pap') edge node[above] {$W(\#)$} (p')
      (qaq') edge node[above] {$W(\#)$} (q')
      ;
    \end{tikzpicture}
\end{center}
\caption{Simulation in $\ASMS$ of transitions $q\trans{!a}q'$ and $p \trans{?a}p'$ of $\RBN$.}
\label{fig:rbn-to-sms}
\end{figure}

A configuration $(C,d)$ of $\ASMS$ is called \emph{good} if
$C(I)=0$ and $d=\#$. 
There is a natural bijection between configurations of $\RBN$ and good configurations of $\ASMS$. 
If $C$ is a configuration of $\RBN$, we will use $(\widehat{C},\#)$ to denote the corresponding good configuration of $\ASMS$.

\noindent \textbf{Correctness of construction} \ 
We now show that $C' \in \poststar_{\RBN}(C)$ iff $(\widehat{C'},\#) \in \poststar_{\ASMS}(\widehat{C},\#)$ for any configurations $C$ and $C'$ of $\RBN$.
Suppose $C \trans{t + t_1,\dots,t_n} C'$ is a step in $\RBN$.
It is easy to see that we have a run in $\ASMS$ of the form $(\widehat{C},\#) \act{\hat{t},\hat{t_1},
\dots,\hat{t_n},t^\#,t_1^\#,\dots,t_n^\#} (\widehat{C'},\#)$.
Hence, if $C \trans{*} C'$ for some configurations $C,C'$ in $\RBN$,
then $(\widehat{C},\#) \trans{*} (\widehat{C'},\#)$ in $\ASMS$.

For the other direction, we first define the notion of a \emph{pseudo-step} between two good configurations of $\ASMS$. 
A run $(\widehat{C},\#) \act{\sigma} (\widehat{C'},\#)$ of $\ASMS$ 
is called a \emph{pseudo-step} if there exists $a \in \Sigma$
and transitions $t \in \delta_{\RBN}^{!a}$ and $t_1,\dots,t_n \in \delta_{\RBN}^{?a}$ 
such that $\sigma = \hat{t},\hat{t_1},\dots,\hat{t_n},t^\#,t_1^\#,\dots,t_n^\#$.
The intuition behind this notion is that if $(\widehat{C},\#) \act{\sigma} (\widehat{C'},\#)$ 
where $\sigma$ is a pseudo-step with $\sigma = \hat{t},\hat{t_1},\dots,\hat{t_n},t^\#,t_1^\#,\dots,t_n^\#$
then $C \act{t+t_1\dots t_n} C'$ is a step in $\RBN$. Hence, pseudo-steps of $\ASMS$ ``behave'' similarly to a single step in $\RBN$.

Now a run $(\widehat{C},\#) \act{\sigma} (\widehat{C'},\#)$ of $\ASMS$ is said to be in \emph{normal form} if it is either the empty run or if it can be decomposed into a sequence of \emph{pseudo-steps}.
Hence, it follows that if $(\widehat{C},\#) \act{\sigma} (\widehat{C'},\#)$
is a run in normal form then $C \act{*} C'$ in $\RBN$. 

The following lemma asserts that whenever there is a run between two good configurations of 
$\ASMS$, then there is also a run between those configurations in normal form. Hence, using this
lemma and the discussion in the previous paragraph, it follows that if $(\widehat{C},\#) \act{*} (\widehat{C'},\#)$ in $\ASMS$ then $C \act{*} C'$ in $\RBN$.


\begin{restatable}[Normal form lemma]{lemma}{LmNormalASMS}
    Suppose $(\widehat{C},\#) \act{\rho} (\widehat{C'},\#)$ is a run in $\ASMS$.
    Then there exists  $\sigma$ such that $(\widehat{C},\#) \act{\sigma} (\widehat{C'},\#)$
    is a run in normal form.
\end{restatable}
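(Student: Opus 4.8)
The goal is to take an arbitrary run $(\widehat{C},\#) \act{\rho} (\widehat{C'},\#)$ of $\ASMS$ between two good configurations and rewrite it into a run in normal form, i.e.\ a concatenation of pseudo-steps. The plan is to proceed by induction on the length of $\rho$. Since $\rho$ starts and ends in a good configuration, the first transition must be of the form $\hat{t}$ for some RBN transition $t$: indeed, from a good configuration the register holds $\#$, so no $R(a)$-transition with $a \in \Sigma_\RBN$ is enabled, and every transition $t^\#$ requires a token in an intermediary state $[q,a,q']$, of which there are none. So the first transition either writes some letter $a$ via a broadcast transition $t = q \trans{!a} q'$ (producing a token in $[q,a,q']$ and register value $a$) — or, a priori, writes $\#$ again via some $t' = [p,a,p'] \trans{W(\#)} p'$, but this latter is impossible as just noted. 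So $\rho$ begins with $\hat{t}$ for a broadcast transition $t \in \delta_\RBN^{!a}$.

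**Key step: extracting the first pseudo-step.** Having fixed this initial broadcast $\hat t$ writing $a$, I would argue that one can permute the transitions of $\rho$ so that a full pseudo-step for $a$ appears as a prefix. The idea is that after $\hat t$ the register contains $a$; the only transitions ``triggered'' by this value are receive-transitions $\hat{t_i}$ with $t_i \in \delta_\RBN^{?a}$, and these can be executed immediately (the process performing such a move needs only to be in the appropriate state $p_i$, and such a move does not change the register). I would show, by a commutation/swapping argument, that we may assume all the receive-transitions on $a$ that ``belong to'' this broadcast occur right after $\hat t$, followed by the matching $\#$-transitions $t^\#, t_1^\#, \dots, t_n^\#$ that close out the broadcast, before any further genuine activity. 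The crucial observation is that any transition occurring between $\hat t$ and the first $W(\#)$ either (i) is a receive $\hat{t_i}$ on $a$, which we absorb into the current pseudo-step, or (ii) commutes backward past the block because it neither reads nor is affected by the register value $a$ and touches disjoint process-tokens (write-transitions with source in $Q_\RBN$ are of the form $q \trans{W(a')}[q,a',q']$, but such a write would overwrite $a$ and destroy the ability to complete pending receives — so in fact one shows these cannot helpfully occur before the block is closed). Once the initial block is isolated as a genuine pseudo-step $\hat t, \hat{t_1}, \dots, \hat{t_n}, t^\#, t_1^\#, \dots, t_n^\#$ ending in a good configuration, the remainder of the run is a shorter run between good configurations, and the induction hypothesis applies.

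**Main obstacle.** The delicate point is the commutation bookkeeping: after $\hat t$ writes $a$, other processes could in principle also perform $W(a)$ or $W(a')$ transitions, overwriting the register, and interleave several half-finished broadcasts at once. One must show that such interleavings can always be serialized — that a run reaching a good configuration can be reordered so that broadcasts are completed one at a time. The right way to handle this, I expect, is to track the multiset of intermediary tokens: in any configuration along $\rho$ the tokens in $I$ correspond to broadcasts ``in progress,'' and one argues that the last-written-before-$\#$ structure, together with the fact that the run ends with $I$-content zero, lets us always pull out a complete broadcast. A clean way to organize this is to pick, in $\rho$, the first occurrence of a $W(\#)$ transition, say $t^\# = [q,a,q'] \trans{W(\#)} q'$; trace back to the (unique, by a counting/matching argument) $\hat t$ that created that token; show every transition strictly between them either commutes leftward past $\hat t$ (disjoint tokens, register value irrelevant to it) or is a receive $\hat{t_i}$ on $a$ that we keep; then show the $t_i^\#$ transitions closing those receives can be gathered immediately after $t^\#$. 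This yields the leading pseudo-step and a strictly shorter tail, closing the induction. The argument that ``no harmful transition can sneak in between'' — essentially that overwriting the register before closing a broadcast cannot be on a run that returns to a good configuration without first closing it — is the technical heart, and I would phrase it as: a token in $[q,a,q']$ can only be removed by its own $t^\#$, which requires the register to be writable (always true for $W$), so $I$-content only decreases via these closings; combined with the endpoint having $I$-content $0$ and a careful swap lemma for independent transitions, serialization follows.
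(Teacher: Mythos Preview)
Your overall framework is right and matches the paper: induction on the length of $\rho$, the observation that the first transition out of a good configuration must be some $\hat t = q \trans{W(a)} [q,a,q']$, extraction of a leading pseudo-step, and recursion on a strictly shorter tail between good configurations.

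The gap is in how you extract that first pseudo-step. Your dichotomy ``every transition between $\hat t$ and the first $W(\#)$ is either (i) a receive on $a$ or (ii) commutes leftward past the block'' is not correct. Between $\hat t$ and the first $W(\#)$ there can perfectly well be another write $\hat{t'} = p \trans{W(b)} [p,b,p']$ with $b \neq a$ (followed by reads of $b$, further writes, etc.): such a transition is not a receive on $a$, and it certainly does not commute past $\hat t$ since both change the register. Your parenthetical ``such a write would destroy the ability to complete pending receives, so one shows these cannot helpfully occur'' is precisely the point that needs proving, and it is false as a statement about what can occur in $\rho$ --- interleaved half-open broadcasts are allowed in $\ASMS$, the lemma is exactly that they can be \emph{serialised}. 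The ``first $W(\#)$, trace back'' variant in your last paragraph runs into the same problem: the segment before the first $W(\#)$ may contain several writes of distinct letters, and your commutation claims do not cover them.

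The paper's fix is to pivot on a different index: not the first $W(\#)$, but the first index $i>1$ at which some $W(b)$ with $b \in \Sigma_\RBN$ (i.e.\ $b \neq \#$) occurs. The point is that strictly between positions $1$ and $i$ the register starts at $a$ and can only change to $\#$, so every transition there is either a read of $a$ or a $W(\#)$; and since no read is possible once $\#$ is written, these split cleanly as ``reads of $a$, then writes of $\#$''. At the configuration $Z$ just before $\rho_i$ the remaining intermediary tokens are all of the form $[p,a,p']$; their closing $W(\#)$ transitions must occur somewhere later in $\rho$ (the run ends good), and since $W(\#)$ is always enabled and no other transition consumes these tokens, those closings can be preponed to immediately after $Z$. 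The prefix up to that point is then a pseudo-step ending in a good configuration, and induction finishes. (If no such $i$ exists, the whole of $\rho$ already has the reads-then-$\#$ shape and is itself a single pseudo-step.) This is the missing idea: choosing the next \emph{non-$\#$} write, rather than the first $\#$-write, is what guarantees the clean ``reads of $a$ then $W(\#)$'' structure that makes the rearrangement straightforward.
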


\begin{proof}[Proof sketch of normal form lemma]
Let $n$ be  the length of $\rho$. \chana{correcting induction start. is it ok that length of sequences is not defined?}
We proceed by induction on $n$. If $n = 0$, we are done.
Let $n > 0$ and $\rho = \rho_1,\dots,\rho_n$. Assume now that any run of length strictly less than $n$ can be put in normal form. By analysing the
structure of the transitions in $\ASMS$ and noticing that $\rho$ begins at a good configuration,
we can first show that $\rho_1$ must be of the form $q \act{W(a)} [q,a,q']$ for some $a \in \Sigma$. 
Then we consider two cases:

\textbf{Case 1: } Suppose there is no $i > 1$ such that $\rho_i$ is a transition which
writes a value $b \neq \#$.  Hence, every transition in $\rho_2,\dots,\rho_n$ either reads the value $a$ or writes $\#$ and so
there must be an index $2 \le j \le n$ such that every transition in $\rho_2,\dots,\rho_{j-1}$
reads $a$ and every transition in $\rho_j, \dots, \rho_n$ writes $\#$. Now, by analysing
the transitions going in and out of the subset $I$ and noticing that the run
begins and ends at good configurations, we can show that
$(\widehat{C},\#) \act{\rho} (\widehat{C'},\#)$ must be a pseudo-step.

\textbf{Case 2: } Suppose there is $i > 1$ such that $\rho_i$ is a transition which writes
a value $b \neq \#$. By the same argument as before, it is easy to see that 
there must exist $2 \le j \le i-1$ such that every transition in $\rho_2,\dots,\rho_{j-1}$
reads $a$ and every transition in $\rho_j, \dots, \rho_{i-1}$ writes $\#$. Let $Z$
be the configuration reached after $\rho_{i-1}$.
Let $M = Z(I)$, i.e., $M$ is the multiset of processes at the configuration $Z$
which are in some intermediary state. Since the only way out of the set $I$ is to
write $\#$ onto the register, if $M = \multiset{[p_1,a,p_1'],\dots,[p_k,a,p_k']}$ then
there must exist $i_1,\dots,i_k > i$ such that each $\rho_{i_l}$ is $[p_l,a,p_l'] \act{W(\#)} p_l'$. 
We can then rearrange the run by first following $\rho$ up till $\rho_{i-1}$, then ``preponing'' the
transitions $\rho_{i_1},\dots,\rho_{i_k}$ and then firing the rest of $\rho$ to reach $(\widehat{C'},\#)$.
With this rearrangement, the run up till $\rho_{i_k}$ becomes a pseudo-step
and so we can apply induction hypothesis on the rest of the run.
\end{proof}

\noindent \textbf{The reduction} \ With this construction, we can now simulate RBN by ASMS as follows: Let $\RBN$ be an RBN with states $Q_\RBN$ and let $\cube_1 = (L_1,U_1),\cube_1' = (L_1',U_1')$ be two cubes of $\RBN$. Construct the ASMS $\ASMS$ as described above.
Then construct the following two cubes $\cube_2 = (L_2,U_2,\#), \cube_2' = (L_2',U_2',\#)$ of $\ASMS$:
$L_2(q), U_2(q), L_2'(q)$ and $U_2'(q)$ are respectively equal to $L_1(q), U_1(q), L_1'(q)$ and $U_1'(q)$ if $q$ is a state of $\RBN$.
If $q$ is in $I$, then $L_2(q) = U_2(q) = L_2'(q) = U_2'(q) = 0$.
It is then easy to see that $C \in \cube_1$ (resp. $\cube_1'$) iff $\hat{C} \in \cube_2$ (resp. $\cube_2'$). Hence, by correctness of our construction, it follows that $\cube_1$ can reach $\cube_1'$ iff $\cube_2$ can reach $\cube_2'$.

\subsection{RBN Simulate ASMS}\label{subsec:RBN-ASMS}


\textbf{Construction} \ Let $\ASMS = (Q_{\ASMS},\Sigma,\delta_{\ASMS})$ be an ASMS. 
We construct an RBN $\RBN$ where one agent  acts like the register of $\ASMS$ and
all the other agents behave like agents of $\ASMS$.
Let $\RBN = (Q_{\RBN},\Sigma_{\RBN},\delta_{\RBN})$ be an RBN defined as follows:
The set of states $Q_{\RBN}$ is comprised of two parts.
The first part consists of the set $Q_{\ASMS} \cup \{[p,a,q]: p \trans{W(a)} q \in \delta_{\ASMS}\}$,
which will  intuitively be used to simulate the processes of $\ASMS$.
The second part consists of the set $\Sigma \cup \{\overline{a} : a \in \Sigma\}$
which will  intuitively be used to simulate the register of $\ASMS$.
The set $\{\overline{a} : a \in \Sigma\}$ is denoted by $\overline{\Sigma}$.
The alphabet $\Sigma_{\RBN}$ is $\{Read_a, Ch_a, Ack_a : a \in \Sigma\}$.

Before describing the transition relation $\delta_{\RBN}$ we set up some notation:
A \emph{good} configuration of $\RBN$ is a configuration $C$ 
such that $\sum_{a \in \Sigma} C(a) = 1$ and
$C(p) = 0$ if $p \notin Q_{\ASMS} \cup \Sigma$.
Intuitively, in a good configuration, there is one process which stores
the value of the register of $\ASMS$ and all the
other processes are in some state of $Q_{\ASMS}$.
Notice that there is a natural bijection between configurations of $\ASMS$ and good configurations of $\RBN$.
If $C$ is a configuration of $\ASMS$, we will use $\widehat{C}$ to denote the corresponding good configuration of $\RBN$.

\begin{figure}
\begin{center}
    \begin{tikzpicture}[->, thick, node distance=1.25cm] 
      \node[place] (q) {$q$}; 
      \node[place,inner sep=1pt] (qaq') [right =of q] {$q,a,q'$};
      \node[place] (q') [right =of qaq'] {$q'$};
      \node[place] (d) [right =of q'] {$d$};
      \node[place] (abar) [right =of d] {$\overline{a}$};
      \node[place] (a) [right =of abar] {$a$}; 
      \node[place] (p) [right =of a] {$p$};
      \node[place] (p') [right =of p] {$p'$}; 
      
      \path[->]
      (q) edge node[above] {$?Ch_a$} (qaq')
      (qaq') edge node[above] {$!Ack_a$} (q')
      (p) edge node[above] {$?Read_a$} (p')
      (d) edge node[above] {$!Ch_a$} (abar)
      (a) edge [loop above] node {$!Read_a$} (a)
      (abar) edge node[above] {$?Ack_a$} (a)
      ;
      
    \end{tikzpicture}
\end{center}
\caption{Simulation in $\RBN$ of transitions $q\trans{W(a)}q'$ and $p \trans{R(a)}p'$ of $\ASMS$.}
\label{fig:sms-to-rbn}
\end{figure}

Now, the transition relation $\delta_{\RBN}$ is constructed so that the following invariant is satisfied:
For any configurations $C$ and $C'$ of $\ASMS$, $C' \in \poststar_{\ASMS}(C)$ iff $\widehat{C'} \in \poststar_{\RBN}(\widehat{C})$.

\begin{itemize}
	\item Suppose $t = p \trans{R(a)} q$ is a transition in $\ASMS$. Correspondingly, we have 
	two transitions $a \trans{!Read_a} a$ and $p \trans{?Read_a} q$ in $\RBN$.
	Hence, if $C \trans{t} C'$ in $\ASMS$, then $\widehat{C} \trans{(a,!Read_a,a) + (p,?Read_a,q)} \widehat{C'}$ in $\RBN$.
	\item Suppose $t = p \trans{W(a)} q$ is a transition in $\ASMS$. 
	We first have two transitions $p \trans{?Ch_a} [p,a,q]$ and $[p,a,q] \trans{!Ack_a} q$.
	Further, for \textbf{every} $d \in \Sigma$, we have the transitions, $d \trans{!Ch_a} \overline{a}$
	and $\overline{a} \trans{?Ack_a} a$.
	Intuitively, the process responsible for the register requests to \emph{change} the value of the register from $d$ to $a$ by 
	broadcasting the message $Ch_a$ and moving to $\overline{a}$.
	The process at state $p$ is capable of receiving this message and moves to the state $[p,a,q]$ 
	and from there it is capable of sending the message $Ack_a$ \emph{acknowledging} the change sent by the register.
	The process at $\overline{a}$ can receive $Ack_a$ and move to $a$.
	Hence, if $C \trans{t} C'$ then $\widehat{C} \trans{(d,!Ch_a,\overline{a}) + (p,?Ch_a,[p,a,q])} C_{int} \trans{([p,a,q],!Ack_a,q) + (\overline{a},?Ack_a,a)} \widehat{C'}$.
	Figure \ref{fig:sms-to-rbn} represents the transitions needed for this simulation.\chana{ref to fig 4}
\end{itemize}

\noindent \textbf{Correctness of construction} \
Hence, if $C \trans{*} C'$ in $\ASMS$ then we have shown that $\widehat{C} \trans{*} \widehat{C'}$ in $\RBN$. Notice
that we have also shown that it is possible to go from $\widehat{C}$ to $\widehat{C'}$ where every broadcasted message is received by \emph{exactly} one other process.
Our next lemma shows that this is not an accident, and indeed any run between $\widehat{C}$ and $\widehat{C'}$ can be transformed into this form.

A run between good configurations of $\RBN$ is said to be in \emph{normal form} if whenever $Z \trans{t + t_1,\dots,t_n} Z'$ is a step
in that run, then $n = 1$. We have the following lemma.
\begin{restatable}[Normal form lemma]{lemma}{LmNormalRBN}
	Suppose there is a run from $Z$ to $Z'$ in $\RBN$ where $Z$ and $Z'$ are good configurations.
	Then there is a run from $Z$ to $Z'$ which is in normal form.
\end{restatable}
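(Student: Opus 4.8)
The plan is to pin down a structural invariant about the ``register agent'' of $\RBN$ and then run an induction that serialises the broadcasts one register round-trip at a time. For the first step I would observe that the region $\Sigma\cup\overline\Sigma$ of $Q_{\RBN}$ is closed under $\delta_{\RBN}$: a process in a state $a\in\Sigma$ can only emit $Read_a$ (staying in $a$) or emit $Ch_b$ (moving to $\overline b$), a process in $\overline a$ can only receive $Ack_a$ (moving to $a$), and no process ever receives a message while sitting in $\Sigma\cup\overline\Sigma$ except via that $\overline a\trans{?Ack_a}a$ step. Hence the number of agents in $\Sigma\cup\overline\Sigma$ is invariant along any run; it equals $1$ in a good configuration, so along a run between good configurations there is always exactly one such agent — the \emph{register agent}. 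Three consequences I would record: (i) at most one agent occupies any $\overline a$, so every $Ack$-step has at most one receiver; (ii) while the register agent is in $\overline\Sigma$ no $Read$- or $Ch$-broadcast is possible, since both require a broadcaster in $\Sigma$; (iii) an agent in an intermediary state $[p,a,q]$ is \emph{inert} — its only move is $[p,a,q]\trans{!Ack_a}q$. I would also use that $d\trans{!Ch_a}\overline a$ is a transition for \emph{every} $d\in\Sigma$, i.e.\ a $Ch_a$-broadcast can be emitted regardless of the current register value.

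The next step removes the easy defects. Since a $Read_a$-broadcaster stays in $a$, a $Read_a$-step with $n\ge 1$ receivers splits into $n$ one-receiver $Read_a$-steps (a routine multiset count shows every needed source state is still populated), and a $0$-receiver $Read$-step is a no-op and is deleted. After this, all $Read$-steps have exactly one receiver and, by (i), all $Ack$-steps have at most one; the only remaining defects are $Ch_a$-steps with a number of receivers $\ne 1$ and $Ack_a$-steps with $0$ receivers. Now I would induct on a well-founded measure on runs between good configurations — for instance the pair (number of $0$-receiver $Ack$-steps $+ \sum_S \max(0, r_S - 1)$, run length), ordered lexicographically, where $r_S$ is the number of receivers of step $S$. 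Looking at the first step $\rho_1$: since the run starts at a good configuration, $\rho_1$ is either a (now one-receiver) $Read$-step — in which case its successor is again good and the induction hypothesis applies to the suffix — or a $Ch_a$-step, in which case $r_{\rho_1}\ge 1$ (a $0$-receiver $Ch_a$-step at a good configuration is impossible: the register would enter $\overline a$, and by (ii)–(iii) no agent could ever return it, contradicting that the run ends good).

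For the $Ch_a$ case I would identify the ``episode'': by the invariant, the first later step $\rho_j$ at which the register agent moves is an $Ack_a$-step whose only receiver is the register, whose broadcasting agent was created by $\rho_1$, and every step strictly between $\rho_1$ and $\rho_j$ is a $0$-receiver $Ack_a$-step (the register is parked in $\overline a$ and only same-value inert agents exist). Permuting $\rho_j$ to sit directly after $\rho_1$ gives an equivalent run, and one then replaces the block ``$\rho_1$, $\rho_j$, and the intervening $0$-receiver $Ack_a$-steps'' by a sequence of one-receiver \emph{macro-steps} of the form ``$Ch_a$ with the single receiver $p\to[p,a,q]$, immediately followed by the $Ack_a$ the register receives from that very $[p,a,q]$'': one macro-step for $\rho_1$'s matched receiver and one for each receiver whose intermediary the run clears by an intervening $0$-receiver $Ack_a$; this is where the remarks that $Ch_a$ is emittable from any register value and that intermediaries are inert are used. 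The resulting prefix consists of one-receiver steps only and ends at a \emph{good} configuration, after which one invokes the induction hypothesis on the remaining run to $Z'$. (When $\rho_1$ already has a single receiver and no intervening steps, $\rho_1\rho_j$ is itself a macro-step and one simply recurses on the shorter suffix.)

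The hard part is making those last sentences rigorous. The good configuration reached by the serialised prefix is \emph{not} literally the configuration of the original run after $\rho_j$: the ``extra'' receivers of $\rho_1$ have been pushed all the way to their $q$-states, whereas in the original run they may linger as stale inert agents $[p_i,a,q_i]$, and such a stale agent may be exactly the one that returns the register on a \emph{later} round-trip. Consuming it early must be compensated at that later round-trip — which then has to be served by one of its own fresh receivers, or else triggers the same transformation recursively. Formalising this amounts to setting up a matching between $Ch_a$-round-trips and the inert agents that return the register, re-timing receptions along that matching, and checking that the chosen measure strictly decreases under the whole transformation (in particular that the re-timing does not inflate the first component). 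This, rather than the bookkeeping of the earlier steps, is where the real content of the lemma lies; the facts that make it go through are precisely (ii)–(iii) of the invariant together with emittability of $Ch_a$ from every register value.
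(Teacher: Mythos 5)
Your setup is sound and matches the paper's strategy: the conservation of the number of agents in $\Sigma\cup\overline{\Sigma}$ along any run, your consequences (i)--(iii), the splitting of $Read$-steps into one-receiver steps, and the identification of a $Ch_a$/$Ack_a$ round-trip of the register agent as the unit to be serialised are exactly the ingredients of the paper's proof. But the proof is not complete: at the one point you yourself flag as ``where the real content of the lemma lies'' --- a stale intermediary $[p_i,a,q_i]$ created by the first $Ch_a$-broadcast surviving past the current round-trip and being the agent that returns the register on a \emph{later} round-trip --- you describe what a proof would have to do (``setting up a matching \dots re-timing receptions \dots checking that the chosen measure strictly decreases'') rather than doing it. In particular, your macro-step replacement is only specified for the receivers whose intermediary is cleared by an intervening $0$-receiver $Ack_a$-step; the fate of the remaining receivers and of the later round-trips their intermediaries serve is left open, and it is never verified that your lexicographic measure decreases under the full transformation. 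That is a genuine gap, not a presentational one.

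For reference, the paper closes this case without any global matching, keeping a plain induction on the length $m$ of the run. After the first round-trip it asks whether some later broadcast $b^i$ is labelled $!Ack_a$ before the register next re-enters $\overline{a}$ (or before the run ends). If not, consequences (ii)--(iii) show the stale intermediaries are untouched throughout that stretch, so their receptions can be \emph{postponed} wholesale to the next $Ch_a$-broadcast; the configuration reached after the serialised one-receiver round-trip is then good and has a strictly shorter run to $Z'$. If some $b^i$ is the first such $!Ack_a$, one shows it must come from a stale intermediary with no receiver ($n_i=0$), and one \emph{prepones} it: immediately after the serialised first round-trip, the register (now back in $\Sigma$) re-broadcasts $Ch_a$ --- possible from any register value --- with the deferred receivers, and that stale intermediary answers it. Either way a good configuration with a run of length $<m$ to $Z'$ is exhibited and the ordinary induction hypothesis applies, with no auxiliary measure. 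Your argument could be completed along these lines, but as written it stops short of the decisive step.
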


First we will see how our simulation is correct, using the normal form lemma.
Suppose $\widehat{C} \trans{*} \widehat{C'}$ in $\RBN$ for some configurations $C$ and $C'$ of $\ASMS$.
By the normal form lemma, we can assume that this run is in normal form and so 
let $\widehat{C} \trans{b^1 + r^1} C_1 \trans{b^2 + r^2} C_2 \dots C_{m-1} \trans{b^m + r^m} \widehat{C'}$.
We proceed by induction on $m$. The base case of $m = 0$ is trivial.
Suppose $m > 0$ and assume the claim holds for all numbers less than $m$.
Since $\widehat{C}$ is a good configuration, there are only two possible cases for $b^1$:

\textbf{Case 1: } Suppose $b^1 = (a,!Read_a,a)$ for some $a \in \Sigma$. 
Hence $r^1$ must be $(p,?Read_a,q)$ for some $p, q \in Q_{\ASMS}$. 
It follows that $C_1 = \widehat{Z}$ for some configuration $Z$ of $\ASMS$.
Since $C \trans{(p,R,a,q)} Z$ in $\ASMS$, by applying the induction hypothesis on the run from $\widehat{Z}$ to $\widehat{C'}$,
we are done.

\textbf{Case 2: } Suppose $b^1 = (d,!Ch_a,\overline{a})$ for some $d,a \in \Sigma$.
Hence $r^1$ must be $(p,?Ch_a,[p,a,q])$ for some $p, q \in Q_{\ASMS}$.
The only process
which can broadcast from $C_1$ is the process at $[p,a,q]$ 
and moreover it can only broadcast $Ack_a$. 
The only process which can receive $Ack_a$ from $C_1$
is the process at the state $\overline{a}$. Hence $b^2 = ([p,a,q],!Ack_a,q)$ and $r^2 = (\overline{a},?Ack_a,a)$.
Therefore, $C_2 = \widehat{Z}$ for some configuration $Z$ of $\ASMS$.
Since $C \trans{(p,W,a,q)} Z$ in $\ASMS$, by applying the induction hypothesis on the run from
$\widehat{Z}$ to $\widehat{C'}$, we are done.

\begin{proof}[Proof sketch of normal form lemma]
Suppose $Z_0 := Z \trans{b^1 + r_1^1,\dots,r_{n_1}^1} Z_1 \trans{b^2 + r_1^2,\dots,r_{n_2}^2} Z_2 \dots Z_{m-1} \trans{b^m + r_1^m,\dots,r_{n_m}^m} Z_m := Z'$.
We proceed by induction on $m$. The case of $m = 0$ is trivial.
	
Suppose $m > 0$ and assume that the claim is true for all numbers less than $m$. Since $Z_0$ is a good configuration, there are only two possible choices for $b^1$.
	
\textbf{Case 1: } Suppose $b^1 = a \trans{!Read_a} a$ for some $a \in \Sigma$.
By firing $b^1$ repeatedly, we can fire $r^1_1,r^1_2,\dots,r^1_{n_1}$ ``one at a time'' and reach
$Z_1$ from $Z_0$ using a run in normal form. We can then apply the induction hypothesis on the run between $Z_1$ and $Z'$.

\textbf{Case 2: } Suppose $b^1 = d \trans{!Ch_a} \overline{a}$ for some $d,a \in \Sigma$.
Hence, $Z_1$ is a bad configuration and so $Z_1 \neq Z'$. If $n_1 = 0$, then no process in $Z_1$ can broadcast any message,
which leads to a contradiction. So, $n_1 > 0$.

For each $1 \le i \le n_1$, let $r_i^1 = (p_i,?Ch_a,[p_i,a,q_i])$.
Let $S := \sum_{i=2}^{n_1} \vec{p_i} - \sum_{i=2}^{n_1} \vec{[p_i,a,q_i]}$
and let $M := \sum_{i=2}^{n_1} \vec{[p_i,a,q_i]}$.
Notice that the only processes which can broadcast a message at the configuration $Z_1$
are the processes in the multiset $\vec{[p_1,a,q_1]} + M$. 
Hence $b^2 = (p_i[a]q_i,!Ack_a,q_i)$ for some $i$. Without loss of generality, we can assume that $i = 1$.

Notice that the only process which can receive the message $Ack_a$ at the configuration $Z_1$
is the process at the state $\overline{a}$. It then follows that either $n_2 = 0$ or $n_2 = 1$. 
Hence, we get two subcases:

\textbf{Case 2a): } Suppose $n_2 = 0$. 
Then reorder the run between $Z_0$ and $Z_2$ as follows:
$Z_0 \trans{b^1 + r_1^1} Z_1 + S \trans{b^2 + (\overline{a},?Ack_a,a)} Z_2 + S - \overline{\vec{a}} + \vec{a} 
\trans{(a,!Ch_a,\overline{a}) + r_2^1,\dots,r_{n_1}^1} Z_2$. Notice that the 
configuration $Z_2 + S - \overline{\vec{a}} + \vec{a}$ is a good configuration and 
has a run of length $m-1$ to $Z'$. Applying induction hypothesis, we are then done.

\textbf{Case 2b): } Suppose $n_2 = 1$. Hence, $r_1^2 = (\overline{a},?Ack_a,a)$ and so
$Z_2(\overline{a}) = 0$ and $Z_2(a) = 1$.
We consider two further subcases:
	\begin{itemize}
		\item Suppose there exists $\alpha > 2$ such that $Z_{\alpha}(\overline{a}) = 1$.
		Let $\alpha$ be the minimum such index. Hence, there must exist some $d' \in \Sigma$ such that $b^{\alpha}$ is $(d',!Ch_a,\overline{a})$.

		Suppose no $b \in \{b^3,\dots,b^{\alpha-1}\}$ is labelled by $!Ack_a$. 
		Intuitively, we can then show that none of the processes in any of the states
		in the multiset $M$ ever make a step between $Z_2$ and $Z_\alpha$. 
		Hence, we can ``postpone'' firing the transitions $r^1_2,\dots,r^1_{n_1}$ and get
		$Z_0 \trans{b^1 + r_1^1} Z_1 + S \trans{b^2 + r_1^2} Z_2 + S \act{*} Z_{\alpha-1}+S \act{b^\alpha + r^1_2,\dots,r^1_{n_1},r_1^{\alpha},\dots,r_{n_\alpha}^\alpha} Z_\alpha$. The configuration $Z_2 + S$ is a good configuration and has a run to $Z'$ of length $m-2$ and so we can apply the induction hypothesis.
		
		Suppose some $b \in \{b^3,\dots,b^{\alpha-1}\}$ is labelled by $!Ack_a$. Let $b = b^i$ be the first such transition.
		By definition of $\alpha$ and by construction of the protocol, we can show that $b^i$ must be $([p_j,a,q_j],!Ack_a,q_j)$ for some $2 \le j \le n_1$ (without loss of
		generality we can assume $j = 2$) and we can also show that no process at the $i^{th}$ step receives this message, i.e. $n_i = 0$.
		Hence, we can ``prepone'' firing the transition $b^i$ and get $Z_0 \trans{b^1 + r_1^1} Z_1 + S \trans{b^2 + r_1^2} Z_2 + S
		\trans{(a,!Ch_a,\overline{a}) + r_2^1,\dots,r_{n_1}^1} Z_2 - \vec{a} + \overline{\vec{a}} 
		\trans{(b^i + (\overline{a},?Ack_a,a)} Z_2 - \vec{[p_2,a,q_2]} + \vec{p_2} \act{*} 
		Z_{i-2} - \vec{[p_2,a,q_2]} + \vec{p_2} \trans{b^{i-1} + r_1^{i-1} \dots r_{n_{i-1}}^{i-1}} Z_i$.
		Notice that $Z_2 + S$ is a good configuration and has a run to $Z'$ of length $<m$ and so we can apply the induction hypothesis.
		
		\item Suppose there does not exist $\alpha > 2$ such that $Z_{\alpha}(\overline{a}) = 1$.
		
		Suppose no $b \in \{b^3,\dots,b^{m}\}$ is labelled by $!Ack_a$. We can once again show that none of the processes in the multiset $M$ 
		ever make a step between $Z_2$ and $Z_m$. Since $Z_m$ is a good configuration, it must then be the case that $n_1 = 1$, which means that
		$Z_2$ is a good configuration and the run between $Z_0$ and $Z_2$ is already in normal form. Because of the induction hypothesis, we are done.
		 
		Suppose some $b \in \{b^3,\dots,b^{m}\}$ is labelled by $!Ack_a$. Let $b = b^i$ be the first such transition.
		In this case, we can do a similar rearrangement like the corresponding previous case by ``preponing'' $b^i$ 
		and then conclude by applying the induction hypothesis.
	\end{itemize}
\end{proof}

\noindent \textbf{The reduction} \
Now, suppose we are given an ASMS $\ASMS$ and two cubes $\cube_1 = (L_1,U_1,d)$ and $\cube'_1 = (L'_1,U'_1,d')$. 
We construct the protocol $\RBN$ as we have described in this section. 
Then we construct two cubes $\cube_2 = (L_2,U_2)$ and $\cube'_2 = (L_2',U_2')$ of $\RBN$ as follows: 
$L_2(q), U_2(q), L_2'(q)$ and $U_2'(q)$ are all respectively equal to $L_1(q), U_1(q), L_1'(q)$ and $U_1'(q)$ if $q \in Q_\ASMS$, 
$L_2(d) = U_2(d) = L_2'(d') = U_2'(d') = 1$ and otherwise $L_2(q) = U_2(q) = L_2'(q) = U_2'(q) = 0$.
It is easy to see that a configuration $C \in \cube_1$ (resp. $\cube'_1$) iff its corresponding 
configuration $\hat{C} \in \cube_2$ (resp. $\cube'_2$). Hence, by our simulation it follows that
$\cube_1$ can reach $\cube'_1$ in $\ASMS$ iff $\cube_2$ can reach $\cube'_2$ 
in $\RBN$.\\

\noindent \textbf{Another reduction} \
While this construction proves the desired result, we need a slightly different construction for the purposes of the next section which we now describe. Given an ASMS $\ASMS$ and 
two cubes $\cube_1 = (L_1,U_1,d)$ and $\cube'_1 = (L_1',U_1',d')$, once again construct the RBN $\RBN$ described in this section
and construct two cubes $\cube_3 = (L_3,U_3)$ and $\cube'_3 = (L_3',U_3')$ of $\RBN$ as follows:
The cube $\cube'_3$ is the same as $\cube'_2$ described before. The cube $\cube_3$ is also
exactly the same as $\cube_2$, except for the constraints $L_2(d) = U_2(d) = 1$ which
are replaced by $L_3(d) = 0, U_3(d) = \infty$.

Since $\cube_2 \subseteq \cube_3$, it follows from the previous reduction that if $\cube_1$ can reach $\cube_1'$,
then $\cube_3$ can reach $\cube_3'$.
For the other direction, notice that, by construction of the protocol $\RBN$,
\begin{equation}\label{eq:invariant}
	\text{If } C \act{} C' \text{ is a step in } \RBN,
	\text{ then } \sum_{q \in \Sigma \cup \overline{\Sigma}} C(q) = \sum_{q \in \Sigma \cup \overline{\Sigma}} C'(q)
\end{equation}
Using this equation and the fact that any configuration in $\cube_3'$ is a good configuration,
it is then clear that if $C_3 \in \cube_3$ such that $C_3 \act{*} C_3'$ with $C_3' \in \cube_3'$,
then $C_3$ must also be a good configuration. 
Hence, we can then conclude 
that $\cube_3$ can reach $\cube_3'$ iff $\cube_1$ can reach $\cube_1'$.

\section{Transferring Existing Results}\label{sec:consequences}
In the previous section, we have shown that RBN and ASMS are polynomial-time equivalent with respect \chana{updated} to the cube-reachability problem. 
Though the precise complexity of this 
problem has not been established for either one of these models, our result shows that it is sufficient to characterize the complexity of cube-reachability
for one of these models. 
Moreover, there exist results for subclasses \chana{updated}
of the cube-reachability problem for both RBN and ASMS. In this section, we use the reductions constructed in the previous section to transfer these results from RBN to ASMS and vice versa. 

\subsection{Unbounded initial cube reachability}
\label{subsec:unbounded-cube}

We consider the following problem for RBN, which we call the \emph{unbounded initial cube reachability} problem: We are given an RBN $\RBN$ and two cubes $\cube = (L,U), \cube' = (L',U')$ with the special property that $L(q) = 0$ and $U(q) \in \{0,\infty\}$ for every state $q$ and 
we would like to check if $\cube$ can reach $\cube'$. Notice that there is no restriction on the cube $\cube'$. 
We will call such a pair $(\cube,\cube')$ as an \emph{unbounded initial cube} pair.
This problem was proved to be \textsf{PSPACE}-complete for RBN in (\cite{FSTTCS12}, Theorem 5.5).
(In~\cite{FSTTCS12}, this result is only stated for cubes with constants encoded in unary, but the proof
can be modified easily to also give the same upper bound when the constants are encoded in binary).

In a similar way, it is possible to define the corresponding problem for ASMS. 
Notice that if $(\cube,\cube')$ is an unbounded initial cube pair for an ASMS $\ASMS$, then
the second reduction in Section~\ref{subsec:RBN-ASMS}  produces an RBN $\RBN$ along with an unbounded initial cube pair as well. This shows that the corresponding problem for ASMS can be solved in \textsf{PSPACE}. 

Further, notice that given an RBN $\RBN$ and an unbounded initial cube pair for $\RBN$, our reduction in Section~\ref{subsec:ASMS-RBN} produces an ASMS $\ASMS$ with an unbounded initial cube pair as well. This shows that the unbounded initial cube reachability problem for ASMS is \textsf{PSPACE}-hard.

\begin{theorem}
	The unbounded initial cube reachability problem for ASMS is \textsf{PSPACE}-complete.
\end{theorem}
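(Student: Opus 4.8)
The plan is to derive the theorem directly from the two polynomial-time reductions of Section~\ref{sec:simulations}, together with the fact that the unbounded initial cube reachability problem for RBN is \textsf{PSPACE}-complete (\cite{FSTTCS12}, Theorem~5.5). The only point that genuinely needs to be checked is that each reduction maps an unbounded initial cube pair to an unbounded initial cube pair; this is just an inspection of the cubes that the reductions output, so I do not expect a real obstacle here --- the substance of the argument already resides in the two normal form lemmas and in the cited RBN result.

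\textbf{Membership in \textsf{PSPACE}.} Let $\ASMS$ be an ASMS with an unbounded initial cube pair $(\cube_1,\cube_1')$, so $L_1(q) = 0$ and $U_1(q) \in \{0,\infty\}$ for every $q \in Q_\ASMS$ (the register is fixed, as usual for an ASMS cube). I would apply the ``Another reduction'' of Section~\ref{subsec:RBN-ASMS}, which produces an RBN $\RBN$ and cubes $\cube_3 = (L_3,U_3)$, $\cube_3' = (L_3',U_3')$ such that $\cube_3$ reaches $\cube_3'$ in $\RBN$ iff $\cube_1$ reaches $\cube_1'$ in $\ASMS$. By construction of $\cube_3$: for $q \in Q_\ASMS$ we have $L_3(q) = L_1(q) = 0$ and $U_3(q) = U_1(q) \in \{0,\infty\}$; for the state $d$ that holds the simulated register value we have $L_3(d) = 0$ and $U_3(d) = \infty$; and for every remaining state (the intermediary states $[p,a,q]$, the states $\overline{a}$, and the other letters of $\Sigma$) we have $L_3(q) = U_3(q) = 0$. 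Hence $L_3 \equiv 0$ and $U_3$ takes values only in $\{0,\infty\}$, and there is no constraint on $\cube_3'$, so $(\cube_3,\cube_3')$ is an unbounded initial cube pair for $\RBN$. Deciding it in \textsf{PSPACE} (by the RBN result) decides the ASMS instance in \textsf{PSPACE}.

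\textbf{\textsf{PSPACE}-hardness.} Conversely, let $\RBN$ be an RBN with an unbounded initial cube pair $(\cube_1,\cube_1')$. I would apply the reduction of Section~\ref{subsec:ASMS-RBN} (the one that builds $\ASMS$, $\cube_2 = (L_2,U_2,\#)$ and $\cube_2' = (L_2',U_2',\#)$), under which $\cube_2$ reaches $\cube_2'$ in $\ASMS$ iff $\cube_1$ reaches $\cube_1'$ in $\RBN$. For $q \in Q_\RBN$ we have $L_2(q) = L_1(q) = 0$ and $U_2(q) = U_1(q) \in \{0,\infty\}$, and for the intermediary states $q \in I$ we have $L_2(q) = U_2(q) = 0$; the register value is fixed to $\#$, as required of an ASMS cube. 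Therefore $(\cube_2,\cube_2')$ is again an unbounded initial cube pair, and the map $(\RBN,\cube_1,\cube_1') \mapsto (\ASMS,\cube_2,\cube_2')$ is a polynomial-time many-one reduction from the \textsf{PSPACE}-hard RBN problem to the ASMS problem. Together with the upper bound, this yields \textsf{PSPACE}-completeness.

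The step that I expect to demand the most attention is confirming that the ``Another reduction'' of Section~\ref{subsec:RBN-ASMS} remains correct once the initial register value is left unconstrained; but this is precisely what is handled there via the invariant~\eqref{eq:invariant}, which forces any configuration of $\RBN$ that reaches a good configuration of $\cube_3'$ to have already been good, so no additional argument is needed.
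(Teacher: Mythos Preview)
Your proposal is correct and follows exactly the paper's approach: use the ``Another reduction'' of Section~\ref{subsec:RBN-ASMS} for the \textsf{PSPACE} upper bound and the reduction of Section~\ref{subsec:ASMS-RBN} for hardness, checking in each case that unbounded initial cube pairs are preserved. Your write-up is in fact more explicit than the paper's own argument, which simply asserts these preservation properties without spelling out the case analysis on the cube components.
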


\subsection{Leader protocols}\label{subsec:leader}
The notion of an ASMS equipped with a \emph{leader} has been studied in~\cite{JACM16, FineGrained}. Formally, an ASMS-leader protocol is a pair of ASMS protocols $\ASMS_C = (Q_C,\Sigma,\delta_C), 
\ASMS_D = (Q_D,\Sigma,\delta_D)$, 
where $\ASMS_C$ is called the \emph{contributor protocol} and $\ASMS_D$ is called the \emph{leader protocol}.
Intuitively, there is exactly one process which executes $\ASMS_D$ (the leader) and all the other processes execute $\ASMS_C$ (contributors).
This is formalized as follows: A configuration of such a system is defined to be a triple $(q,M,a)$ where $q \in Q_D$, $M$ is a multiset on $Q_C$
and $a \in \Sigma$. A step between $C = (q,M,a)$ and $C' = (q',M',a')$ exists if one of the following is true:
\begin{itemize}
	\item There exists $(q,\oper,a',q') \in \delta_D$ such that $M' = M$ and either $\oper = R$ and $a = a'$, or $\oper = W$.
	\item There exists $(p,\oper,a',p') \in \delta_C$ such that $q = q'$, $M(p) \ge 1$, $M' = M - \vec{p} + \vec{p'}$, and either $\oper = R$ and $a = a'$, or $\oper = W$.
\end{itemize}

We can then define the notion of a run for an ASMS-leader protocol in the usual way. The \emph{ASMS-leader reachability} problem is to decide, given an ASMS-leader protocol $(\ASMS_C,\ASMS_D)$, two leader states $q_D^I, q_D^f$, a contributor state $q_C^I$ and two data values $a,a' \in \Sigma$ whether there exists a $k \ge 1$ such that the configuration $(q_D^I,\multiset{k \cdot q_C^I},a)$ can reach a configuration $C' = (q_D^f,M',a')$ for some $M'$.

We now define a special case of cube-reachability in ASMS and notice that this special case is exactly equivalent to ASMS-leader reachability.
An \emph{ASMS-leader cube} is a pair $(\ASMS,\cube,\cube')$ of the following form: The protocol $\ASMS = (Q,\Sigma,\delta)$ is such that there exists a partition of the states and transition relation as $Q = Q_C \cup Q_D, \delta = \delta_C \cup \delta_D$ 
and $\cube = (L,U,a), \cube' = (L',U',a')$ satisfy: There exists exactly two states $q_D^I, q_D^f \in Q_D$ such that $L(q_D^I) = U(q_D^I) = 1, L'(q_D^f) = U'(q_D^f) = 1$ 
and for every other state $q \in Q_D$, $L(q) = L'(q) = U(q) = U'(q) = 0$ and there exists exactly one state $q_C^I \in Q_C$ such that $L(q_C^I) = L'(q_C^I) = 0, U(q_C^I) = U'(q_C^I) = \infty$ and for every other state $q \in Q_C$, $L(q) = U(q) = L'(q) = 0, U'(q) = \infty$. 
Notice that Example~\ref{ex:asms} is an example of an ASMS-leader cube.

It is easy to see that the ASMS-leader reachability problem is equivalent to the cube-reachability problem for ASMS-leader cubes.
The following result has been shown for ASMS.
\begin{theorem}[\cite{JACM16}]
	The ASMS-leader reachability problem is in \textsf{NP}.
\end{theorem}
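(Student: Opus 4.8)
The plan is to exhibit, for any positive instance, a polynomial-size certificate checkable in polynomial time; the engine behind the certificate is a \emph{copycat} (monotonicity) property of the contributor protocol. If a run uses a contributor that reaches a state $p$ and then writes a value $d$, we may add extra contributors that ``shadow'' it step by step up to the moment just before that write, thereby keeping arbitrarily many contributors poised to (re)write $d$. Hence adding contributors never loses reachable configurations, and the only feature of the contributor pool that the leader can exploit is the \emph{set of register values the pool can make appear}, which grows monotonically along any run.

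Concretely, I would guess an increasing chain $S_0\subseteq S_1\subseteq\cdots\subseteq S_\ell\subseteq\Sigma$ of ``available'' value-sets (so $\ell\le|\Sigma|$, since the chain grows strictly at each non-stuttering step), splitting the target run into $\ell+1$ \emph{blocks}, where $S_i$ is the set of values the contributor pool can provide during block $i$. For each new value $d\in S_i\setminus S_{i-1}$ the certificate carries a witness that one contributor, started in $q_C^I$, can reach a state that writes $d$ using only reads of values available during block $i$ (those in $S_{i-1}$, the initial value $a$, or values the leader writes within block $i$); this witness is a path in $\ASMS_C$, which may be taken simple and so has length at most $|Q_C|$. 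The certificate also carries a path of the leader from $q_D^I$ to $q_D^f$, annotated at each step with the current block index and the current register content, so that every leader read is of a value available in its block (or of the value just written), every leader write is recorded, and every contributor-provided value it relies on is certified before it is used. Within a single block this leader path may be taken simple (a repeated leader state in a block carries no new information), hence has length at most $|Q_D|\cdot(\ell+1)$. So the certificate has size polynomial in $|Q_C|+|Q_D|+|\Sigma|$.

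Checking the certificate is polynomial: verify each contributor witness is a legal path of $\ASMS_C$ respecting its block's read restriction; verify the leader path is a legal path of $\ASMS_D$ whose register annotations are consistent under writes and reads; and verify that every value used is available in its block. Conversely, a valid certificate is realisable by a concrete run: instantiate enough copies of each certified contributor path (copycat makes ``enough'' finite), interleave them with the leader path so that every needed value is written just before it is read, and obtain a run from $(q_D^I,\multiset{k\cdot q_C^I},a)$ to some $(q_D^f,M',a')$. The main obstacle is the genuinely two-way interaction — the leader may have to write a value solely so that a contributor can advance and then hand a new value back — so the delicate step is the \emph{staging lemma}: showing that any accepting run can be reordered into blocks along such a monotone chain, and in particular that neither the leader path inside a block nor the contributor witnesses need exceed the stated simple-path bounds. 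This reordering argument, driven by copycat, is the technical heart; the size and verification bounds are then routine.
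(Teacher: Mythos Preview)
The paper does not prove this theorem: it is quoted verbatim as a result of~\cite{JACM16} and used as a black box to obtain the \textsf{NP} upper bound for RBN-leader reachability. So there is no proof in the paper to compare against.

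That said, your sketch is essentially the argument of~\cite{JACM16}: the copycat lemma for contributors, a monotone chain of ``available'' value sets of length at most $|\Sigma|$, short contributor witnesses for each newly available value, and a bounded leader path threaded through the blocks. One point to tighten: the assertion that the leader path can be taken \emph{simple} inside each block is not quite right as stated. Cutting a leader cycle inside block~$i$ may delete leader writes that contributor witnesses certified in later blocks rely on, so you cannot simply excise repeated states. The actual bound in~\cite{JACM16} comes from a finer accounting (essentially via the finite sequence of \emph{first writes} of each value, of length at most $|\Sigma|$, and a saturation argument between them), not from per-block simplicity of the leader path. You already flag the ``staging lemma'' as the delicate step, which is accurate; just be aware that the polynomial bound on the leader trace needs that finer argument rather than naive cycle elimination.
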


Now, we can define RBN-leader protocols and RBN-leader cubes in exactly the same way as was done for ASMS. Further, notice that the reduction given in Section~\ref{subsec:ASMS-RBN} has the following special property:
If we are given an RBN-leader cube $(\RBN,\cube_1,\cube'_1)$, then the reduction produces an ASMS-leader cube $(\ASMS,\cube_2,\cube_2')$. This proves that the RBN-leader cube reachability problem (and hence the RBN-leader
reachability problem) is in \textsf{NP}. 

Notice that the reduction given in Section~\ref{subsec:RBN-ASMS} does not output a RBN-leader cube when it is given an ASMS-leader cube as input. Hence, we do not immediately get \textsf{NP}-hardness of the 
RBN-leader reachability problem. Nevertheless, by a reduction from 3-SAT similar to that of the one given in Theorem 10 of~\cite{FineGrained}, we can prove \textsf{NP}-hardness of RBN-leader reachability. We then get
\begin{restatable}{theorem}{ThmLeaderRBN}
	The RBN-leader reachability problem is \textsf{NP}-complete.
\end{restatable}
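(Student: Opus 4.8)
Membership in \textsf{NP} has already been established above: the reduction of Section~\ref{subsec:ASMS-RBN} maps an RBN-leader cube to an ASMS-leader cube, and ASMS-leader reachability is in \textsf{NP}~\cite{JACM16}. So the only remaining task is \textsf{NP}-hardness, and the plan is a polynomial-time reduction from $3$-SAT along the lines of Theorem~10 of~\cite{FineGrained}, adapted to the broadcast setting.

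Given a $3$-CNF formula $\phi = c_1 \wedge \dots \wedge c_m$ over variables $x_1,\dots,x_n$, I would build an RBN-leader protocol in which the leader \emph{guesses} a truth assignment and the arbitrarily many contributors \emph{certify} the clauses. The leader protocol is an acyclic path: starting from $q_D^I$ it broadcasts, for each $i$ in turn, exactly one of two messages $b_i^0, b_i^1$ and then moves to the stage for $x_{i+1}$; acyclicity forces a well-defined assignment $\nu$, where $\nu(x_i)$ is the value it broadcast for $x_i$. After all $n$ broadcasts the leader enters a clause-checking phase: for $j = 1,\dots,m$ it broadcasts a request $\mathit{query}_j$ and then waits to receive an acknowledgement $\mathit{ack}_j$, reaching the target state $q_D^f$ only once all $m$ acknowledgements have arrived. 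A contributor starts in $q_C^I$ and, for each clause $c_j$ and each of its literals, has a receive-edge from $q_C^I$ to a common state $\mathit{wit}_j$ labelled by $?b_i^1$ if that literal is $x_i$ and by $?b_i^0$ if it is $\neg x_i$; from $\mathit{wit}_j$ it can receive $\mathit{query}_j$ and move to a state from which it broadcasts $\mathit{ack}_j$. The construction is clearly polynomial in $|\phi|$.

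For correctness I would argue both directions. If $\nu \models \phi$, instantiate the system with $k = m$ contributors: the leader broadcasts $b_1^{\nu(x_1)},\dots,b_n^{\nu(x_n)}$, and for each clause $c_j$ a dedicated contributor waits at $q_C^I$ (receptions being optional in RBN) until the broadcast of a literal of $c_j$ satisfied by $\nu$, moving then to $\mathit{wit}_j$; in the clause phase each $\mathit{query}_j$ is answered by that contributor, so the leader reaches $q_D^f$. Conversely, if the leader reaches $q_D^f$ for some $k$, then since its protocol is an acyclic path the run uses each leader transition exactly once, so exactly one of $b_i^0, b_i^1$ is emitted during the run, which defines $\nu(x_i)$; and for each $j$ the leader receives some $\mathit{ack}_j$. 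Tracing backwards: $\mathit{ack}_j$ can only be broadcast by a contributor that passed through $\mathit{wit}_j$; $\mathit{wit}_j$ can only be entered by receiving some broadcast $b_i^v$ whose literal occurs in $c_j$; and messages of the form $b_i^\bullet$ are emitted only by the leader, so $v = \nu(x_i)$, hence $\nu$ satisfies $c_j$. As $j$ was arbitrary, $\nu \models \phi$.

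The step I expect to be the main obstacle is the backward direction, i.e.\ ruling out any way for the leader to ``cheat''. Because contributors may themselves broadcast, I must make sure there is no way to produce a spurious $\mathit{ack}_j$, no way to enter $\mathit{wit}_j$ without genuinely receiving a satisfying-literal broadcast, and no way for the leader to leave its acyclic skeleton; this is precisely what dictates the shape of both protocols (exactly one value broadcast per variable, $\mathit{wit}_j$ guarded by literal-labelled receive-edges, and no contributor edge emitting a $b_i^\bullet$ or a $\mathit{query}_j$ message). By contrast, the reconfigurable topology and the optional-reception semantics are exactly what make the scheduling in the forward direction routine, so I expect the reduction to go through without further complications.
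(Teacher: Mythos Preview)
Your proposal is correct and follows essentially the same approach as the paper: a 3-SAT reduction in which the leader guesses an assignment via an acyclic sequence of broadcasts, contributors record information from those broadcasts, and the leader then verifies each clause by receiving a message from a contributor. The paper's version is slightly leaner---contributors store \emph{variable values} (states $y_i,\bar{y_i}$) and broadcast literal messages that the leader receives directly, dispensing with your $\mathit{query}_j/\mathit{ack}_j$ handshake---but the structure and the correctness argument are the same.
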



\subsection{Almost-sure coverability}

We now consider the notion of \emph{almost-sure coverability} for ASMS. Let $\ASMS = (Q,\Sigma,\delta)$ be an ASMS with two distinguished states $q_I, q_f$ and 
a distinguished letter $d \in \Sigma$. Let $\uparrow q_f$ denote the set of all configurations $C$ such that $C(q_f) \ge 1$.
For any $k \ge 1$, we say that the configuration $(\multiset{k \cdot q_I},d)$ \emph{almost-surely covers} $q_f$ iff $\post^*((\multiset{k \cdot q_I},d)) \subseteq \pre^*(\uparrow q_f)$. 
The reason behind calling this the almost-sure coverability relation is that the definition given
here is equivalent to covering the state $q_f$ from $(\multiset{k \cdot q_I},d)$ with probability 1
under a probabilistic scheduler which picks processes uniformly at random at each step.

The number $k$ is called a \emph{cut-off} if one of the following is true: 1) Either for all $h \ge k$, the configuration $(\multiset{h \cdot q_I},d)$ almost-surely covers $q_f$. In this case, $k$ is a positive cut-off. Or, 2) for all $h \ge k$, the configuration $(\multiset{h \cdot q_I},d)$ does not almost-surely cover $q_f$. In this case, $k$ is a negative cut-off.
Note that from the definition alone, it is not clear that a cut-off must exist for every ASMS. The following result is known.
\begin{theorem}[Theorem 3 of~\cite{ICALPPatricia}]~\label{thm:ICALPPatricia}
	Given an ASMS with two states $q_I,q_f$ and a letter $d$, a cut-off always exists. Whether the cut-off is positive or negative can be decided in \textsf{EXPSPACE}.
\end{theorem}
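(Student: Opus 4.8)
The plan is to recast almost-sure coverability as a safety (avoidance) question and then show that its answer stabilises as the number of processes grows. First I would observe that $\ASMS$ is a well-structured transition system for the order $(M,d)\le(N,d')$ given by $d=d'$ and $M\le N$ (more processes, same register value): any step enabled at $(M,d)$ via $t=(p,\oper,d'',q)$ with $M(p)>0$ is still enabled after adding processes to states of $Q$, so monotonicity holds. Since $\uparrow q_f=\{C:C(q_f)\ge 1\}$ is upward-closed, $\prestar(\uparrow q_f)$ is upward-closed and its complement $B:=\overline{\prestar(\uparrow q_f)}$ is downward-closed. For a fixed $k$ there are only finitely many configurations of total size $k$ (finitely many multisets of that size over the finite set $Q$, times the finite register alphabet), so the stochastic scheduler induces a \emph{finite} Markov chain. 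By the standard characterisation of probability-$1$ reachability in finite chains, $(\multiset{k\cdot q_I},d)$ almost-surely covers $q_f$ iff every reachable configuration can still reach $\uparrow q_f$, i.e. iff $\poststar((\multiset{k\cdot q_I},d))\cap B=\emptyset$. Thus the question becomes: can $(\multiset{k\cdot q_I},d)$ reach the downward-closed set $B$?

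Next I would make $B$ effective. Using backward reachability in the WSTS (an effective predecessor basis, with termination guaranteed by Dickson's lemma over $\N^Q$ together with the finite register alphabet) I would compute a finite basis of $\prestar(\uparrow q_f)$, and hence a description of $B$ by finitely many downward generators whose coordinates are bounded by some threshold $N$. Choosing $N$ at least this bound, membership of a configuration $(M,d)$ in $B$ depends only on its $N$-truncated abstraction $\alpha_N(M,d)$, which records $d$ and, for each state, $\min(M(q),N)$ with the value $N$ collapsed to a symbol $\omega$ meaning ``at least $N$''.

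The heart of the argument is the cut-off. I would lift the step relation to these $\omega$-abstractions in Karp--Miller fashion and argue that \emph{concrete} reachability of $B$ from $(\multiset{k\cdot q_I},d)$ agrees with \emph{symbolic} reachability of $B$ in the abstract system once $k$ is large enough. Soundness (concrete $\Rightarrow$ symbolic) is immediate from monotonicity. The crucial direction is completeness: a symbolic run that relies on $\omega$-states must be instantiable into a concrete run, which holds provided the initial multiset supplies enough copies---a copycat/pumping lemma yielding a concrete threshold $N'$ of Rackoff--exponential magnitude above which the two notions coincide. Since for every $k\ge N$ the initial abstraction $\alpha_N((\multiset{k\cdot q_I},d))$ is the same (namely $\omega$ on $q_I$, $d$ in the register, $0$ elsewhere), the symbolic reachable set---and in particular whether it meets $B$---is identical for all such $k$. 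Combining this with completeness for $k\ge N'$ shows that the predicate ``$(\multiset{k\cdot q_I},d)$ almost-surely covers $q_f$'' is constant for $k\ge N'$; that constant value is exactly a positive or a negative cut-off, establishing existence. I expect the main obstacle to be making this completeness/pumping step quantitative enough to produce an explicit $N'$, since a symbolic run may transiently require many processes in an $\omega$-state and one must certify that finitely (but exponentially) many suffice.

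Finally, for the complexity I would keep every ingredient within \textsf{EXPSPACE}: the basis of $\prestar(\uparrow q_f)$ and the threshold $N$ are of exponential magnitude (Rackoff-style bounds for coverability in this VASS-like model), each $\omega$-abstraction is representable in exponential space, and deciding whether the symbolic reachable set from $\alpha_N((\multiset{N\cdot q_I},d))$ intersects $B$ is an on-the-fly reachability search over the graph of $\omega$-abstractions. Reporting ``meets $B$'' yields a negative cut-off and ``avoids $B$'' a positive cut-off, so the whole decision procedure runs in \textsf{EXPSPACE}.
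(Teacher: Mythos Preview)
This theorem is not proved in the paper; it is quoted from~\cite{ICALPPatricia} and used as a black box to transfer the result to RBN. There is therefore no proof here to compare your proposal against.

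On its own merits: your reformulation---$(\multiset{k\cdot q_I},d)$ almost-surely covers $q_f$ iff $\poststar((\multiset{k\cdot q_I},d))$ avoids the downward-closed set $B=\overline{\prestar(\uparrow q_f)}$---is correct, and computing a basis for $\prestar(\uparrow q_f)$ by WSTS backward search is fine. The gap is the cut-off step. Karp--Miller style $\omega$-abstractions are designed for \emph{upward}-closed targets: they over-approximate the reachable set from above, which is the wrong direction when the target $B$ is \emph{downward}-closed. Concretely, if a symbolic run reaches some abstract state $a$ lying in $B$, you still need a concrete run with exactly $k$ processes that ends in $B$; the surplus processes must be placed somewhere in the final configuration, and any such placement may push it out of the downward-closed set $B$. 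Conversely, if a concretely reachable $M$ lies in $B$, the Karp--Miller guarantee only yields a symbolically reachable $a$ dominating the abstraction of $M$, and $a\notin B$ does not preclude $M\in B$. So neither direction of your ``symbolic $\Leftrightarrow$ concrete for large $k$'' follows from generic WSTS or Karp--Miller theory. The step you flag as ``the main obstacle'' is not a matter of quantifying a routine pumping lemma: establishing that reachability of $B$ stabilises in $k$ \emph{is} the proof of cut-off existence, and in~\cite{ICALPPatricia} it is obtained through structural arguments specific to shared-memory systems rather than through a counter abstraction.
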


We can now translate this result to RBNs. Given an RBN $\RBN$ and two states $q_I, q_f$, we first set $\uparrow q_f := \{C : C(q_f) \ge 1 \}$.
Then for any $k \ge 1$, we say that $\multiset{k \cdot q_I}$ almost-surely covers $q_f$ iff $\post^*(\multiset{k \cdot q_I}) \subseteq \pre^*(\uparrow q_f)$. 
We can then define positive and negative cut-offs in a similar manner. 
Now for the RBN $\RBN$, let $\ASMS$ be the ASMS protocol that we construct in our reduction given in Section~\ref{subsec:ASMS-RBN}. Using the construction of $\ASMS$, we can then easily show that
\begin{quote}
    for any $k \ge 1$, $\post^{*}_{\RBN}(\multiset{k \cdot q_I}) \subseteq \pre^{*}_{\RBN}(\uparrow q_f)$ iff $\post^{*}_{\ASMS}(\multiset{k \cdot q_I},\#) \subseteq \pre^{*}_{\ASMS}(\uparrow q_f)$.    
\end{quote}
This then directly implies that
\begin{restatable}{theorem}{ThmASCoverabilityRBN}
	Given an RBN with two states $q_I,q_f$, a cut-off always exists. Whether the cut-off is positive or negative can be decided in \textsf{EXPSPACE}.
\end{restatable}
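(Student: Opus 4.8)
The plan is to derive the theorem from Theorem~\ref{thm:ICALPPatricia} through the ASMS-construction of Section~\ref{subsec:ASMS-RBN}. Fix an RBN $\RBN$ together with states $q_I,q_f$, and let $\ASMS$ be the ASMS obtained from $\RBN$ by that construction; recall that $q_I,q_f\in Q_\RBN\subseteq Q_\ASMS$ and that the good configuration of $\ASMS$ corresponding to $\multiset{k\cdot q_I}$ is $(\multiset{k\cdot q_I},\#)$. Everything reduces to the equivalence stated just above the theorem: for every $k\ge1$,
\[
  \post^{*}_{\RBN}(\multiset{k\cdot q_I})\subseteq\pre^{*}_{\RBN}(\uparrow q_f)
  \quad\Longleftrightarrow\quad
  \post^{*}_{\ASMS}(\multiset{k\cdot q_I},\#)\subseteq\pre^{*}_{\ASMS}(\uparrow q_f).
\]
Granting this, I would finish as follows: Theorem~\ref{thm:ICALPPatricia} applied to $\ASMS$ with states $q_I,q_f$ and letter $\#$ gives that a cut-off for $\ASMS$ exists and that deciding its sign is in \textsf{EXPSPACE}; the displayed equivalence shows that $k$ is a positive (resp.\ negative) cut-off for $\RBN$ exactly when it is one for $\ASMS$, so a cut-off for $\RBN$ exists and its sign is decidable in \textsf{EXPSPACE} — the translation $\RBN\mapsto\ASMS$ being polynomial, it does not spoil the complexity bound.

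To prove the equivalence I would isolate one auxiliary fact about $\ASMS$: from any configuration $(D,d)$ of $\ASMS$ reachable from a good configuration, repeatedly firing the intermediary-exit transitions $[p,a,p']\trans{W(\#)}p'$ empties the set $I$ entirely and lands in a \emph{good} configuration $(\widehat C,\#)$. The register ends in $\#$ either because such a transition was fired or because $I$ was already empty — and the latter forces $d=\#$, since inspecting the transitions of $\ASMS$ shows that the only ones writing a value other than $\#$ simultaneously push a process into $I$, so no configuration reachable from a good one can have $D(I)=0$ and $d\neq\#$. Moreover this emptying never removes a process from $q_f$ (as $q_f\notin I$), so $C(q_f)\ge D(q_f)$. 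Concatenating the run from a good configuration to $(D,d)$ with this emptying run yields a run of $\ASMS$ between two good configurations, which the Normal form lemma of Section~\ref{subsec:ASMS-RBN} converts into a run of $\RBN$; in particular $C\in\post^{*}_{\RBN}(C_0)$ whenever $(D,d)\in\post^{*}_{\ASMS}(\widehat{C_0},\#)$.

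With this in hand the two implications are short. For ``$\Rightarrow$'', assume $\post^{*}_{\RBN}(\multiset{k\cdot q_I})\subseteq\pre^{*}_{\RBN}(\uparrow q_f)$ and take $(D,d)\in\post^{*}_{\ASMS}(\multiset{k\cdot q_I},\#)$; empty $I$ to a good $(\widehat C,\#)$ with $C\in\post^{*}_{\RBN}(\multiset{k\cdot q_I})$, so by hypothesis $C\act{*}_\RBN C''$ with $C''(q_f)\ge1$, and by the (easy) forward direction of correctness $(\widehat C,\#)\act{*}_\ASMS(\widehat{C''},\#)$ with $\widehat{C''}(q_f)=C''(q_f)\ge1$; chaining the runs gives $(D,d)\in\pre^{*}_{\ASMS}(\uparrow q_f)$. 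For ``$\Leftarrow$'', assume $\post^{*}_{\ASMS}(\multiset{k\cdot q_I},\#)\subseteq\pre^{*}_{\ASMS}(\uparrow q_f)$ and take $C\in\post^{*}_{\RBN}(\multiset{k\cdot q_I})$; by correctness $(\widehat C,\#)\in\post^{*}_{\ASMS}(\multiset{k\cdot q_I},\#)$, so $(\widehat C,\#)\act{*}_\ASMS(D,d)$ with $D(q_f)\ge1$; emptying $I$ from $(D,d)$ gives a good $(\widehat{C'},\#)$ with $C'(q_f)\ge1$ and, via the Normal form lemma, $C\act{*}_\RBN C'$, so $C\in\pre^{*}_{\RBN}(\uparrow q_f)$. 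The step I expect to be the main obstacle is the auxiliary fact — specifically making precise that the emptying lands in a genuinely good configuration (including ruling out the degenerate case $D(I)=0$, $d\neq\#$) and that the concatenated run legitimately falls under the Normal form lemma, which is stated only for runs between good configurations.
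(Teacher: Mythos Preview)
Your proposal is correct and follows essentially the same approach as the paper's own proof: both apply the ASMS construction of Section~\ref{subsec:ASMS-RBN}, establish the displayed equivalence by combining the forward/backward correctness of that construction with the ``empty the intermediary states by firing $[p,a,p']\trans{W(\#)}p'$'' observation (the paper packages this as a separate Remark~2), and then invoke Theorem~\ref{thm:ICALPPatricia}. Your handling is in fact slightly more careful than the paper's, since you explicitly justify via an invariant why the degenerate case $D(I)=0$, $d\neq\#$ cannot arise from a good configuration, whereas the paper leaves this implicit.
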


\section{A Third Model}\label{sec:IO}
We have shown that RBN and ASMS are cube-reachability equivalent and using this we have transferred some results between these two models.
In this section, we will introduce a third model called \emph{Immediate Observation} (IO) nets and show that cube-reachability for IO nets can be reduced to cube-reachability for RBN.
Further, we show that a stronger notion of reduction -- which is satisfied by all the reductions given in this paper -- cannot exist from RBN to IO nets.

\subsection{Immediate Observation Nets}
Immediate observation nets, or \emph{IO nets}, were introduced in~\cite{conf/apn/EsparzaRW19}.
They are a subclass of Petri nets with applications in population protocols and chemical reaction networks. 
An IO net is a Petri net with transitions of a certain shape:
Informally, 
a process (or token) in a state (or place) $p$ \emph{observes} the presence of a process in $q$ and moves to state $p'$, for some states $p,q,p'$ not necessarily distinct.
Because of this, IO nets
can be described in a simpler manner that does not use the full Petri net formalism. 
We will present them this way here, 
to highlight the similarity to the other models and 
to simplify notation.

\begin{definition}
\label{def:io}
An \emph{immediate observation net} is a tuple $\net = (Q,\delta)$ 
where $Q$ is a finite set of states 
and $\delta \subseteq Q \times Q \times Q$ is the transition relation.
\end{definition}

If $(p,q,p') \in \delta$, then we sometimes denote it by $p \act{q} p'$.
A \emph{configuration} $C$ of an IO net $\net$ is a multiset over $Q$.
It intuitively counts the number of processes in each state. 
There is a \emph{step} from a configuration $C$ to a configuration $C'$
if there exists $t=p \trans{q} p' \in \delta$,
such that $C \ge \multiset{p,q}$
and $C' = C - \vec{p} + \vec{p'}$. 
We denote by $C \trans{t} C'$ such a step, and by $\trans{*}$ the reflexive transitive closure of the step relation. 
We can then define runs of an IO net in the usual way.

\subsection{RBN Simulate IO Nets}
\textbf{Construction} \ Let $\net = (Q,\delta)$ be an IO net. 
We construct an RBN that simulates $\net$
in which processes send messages signaling their current state.
Let $\RBN = (Q', \Sigma',\delta')$ be the following RBN: 
The set of states $Q'$ and the alphabet $\Sigma'$ are both  equal to $Q$.
The transition relation $\delta'$ is such that 
for every $q \in Q$ 
there is a transition $q \trans{!q} q$ in $\delta'$, and
for every $p \trans{q} p' \in \delta$ 
there is a transition $p \trans{?q} p'$ in $\delta'$.


\noindent \textbf{Correctness of construction} \ There is a natural bijection between configurations of $\RBN$ and configurations of $\ASMS$. 
If $C$ is a configuration of $\net$, we will abuse notation and denote the corresponding  configuration of $\RBN$ also as $C$.
We now show that $C' \in \poststar_{\net}(C)$ iff $C' \in \poststar_{\RBN} (C)$ for any configurations $C$ and $C'$ of $\net$.
Indeed, if $C$ reaches $C'$ by one step $p\trans{q} p'$ in $\net$, 
then $C \trans{t+t_1} C'$ with $t=q \trans{!q} q$ and $t_1= p \trans{?q} p'$ in $\RBN$. 
Conversely,  let $C \trans{t+t_1,\ldots, t_k} C'$ be a step  in $\RBN$ with
$t= q \trans{!q} q$ and $t_i=p_i \trans{?q} p'_i$ 
for some $k \ge 0$.
The step must be of this form because the only broadcast transitions of $\RBN$
are of the form $q \trans{!q} q$.
Then $C$ reaches $C'$ by the sequence of transitions
$ (p_1 \trans{q} p'_1 ), (p_2 \trans{q} p'_2), \ldots ,( p_k \trans{q} p'_k)$ in $\net$. 

\noindent \textbf{The reduction}  \ With this construction, RBN can simulate IO nets as follows: 
Let $\net$ be an IO net and let $\cube_1 = (L_1,U_1),\cube_1' = (L_1',U_1')$ be two cubes of $\net$. 
Construct the RBN $\RBN$ as described above,
and let $\cube_2 = \cube_1$ and $\cube_2' = \cube_1'$.
By our construction, $\cube_1$ can reach $\cube_1'$ iff $\cube_2$ can reach $\cube_2'$.

\paragraph*{Consequences.}
In~\cite{FSTTCS12}, 
two further restrictions of the unbounded initial cube reachability problem
(presented in Section \ref{subsec:unbounded-cube})
are considered. 
The first restriction, dubbed $CRP[\ge1]$
(where CRP stands for cardinality reachability problem), 
considers only unbounded initial cube pairs $\cube, \cube'$
in which $\cube'=(L',U')$ is such that 
$L'(q) \in \set{0,1}$ and $U'(q) =\infty$ for all $q$.
The second restriction, dubbed $CRP[\ge1,=0]$, 
considers only unbounded initial cube pairs $\cube, \cube'$
in which $\cube'=(L',U')$ is such that 
$L'(q) \in \set{0,1}$ and $U'(q) \in \set{0,\infty}$ for all $q$.
For RBN, the problems $CRP[\ge 1]$ and $CRP[\ge 1,=0]$ are 
shown to be in PTIME and in NP (Theorem 3.3 and 4.3 of~\cite{FSTTCS12}), respectively.
By the construction given above, it is then immediately clear that

\begin{theorem}
\label{thm:io-p}
For IO nets, $CRP[\ge 1]$ and $CRP[\ge 1,=0]$ are in PTIME and in NP respectively.
\end{theorem}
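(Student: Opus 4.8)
The plan is to observe that Theorem~\ref{thm:io-p} follows immediately by composing the reduction from cube-reachability for IO nets to cube-reachability for RBN constructed just above with the known complexity bounds for RBN recalled in Section~\ref{subsec:unbounded-cube}. Concretely, given an IO net $\net = (Q,\delta)$ and cubes $\cube_1,\cube_1'$, that reduction builds an RBN $\RBN = (Q',\Sigma',\delta')$ with $Q' = \Sigma' = Q$, and leaves the cubes untouched: $\cube_2 = \cube_1$ and $\cube_2' = \cube_1'$. Since the state set is unchanged, the configuration bijection is the identity, and the cubes are literally the same pairs of lower/upper bounds, every syntactic restriction on the input cubes is preserved verbatim.

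First I would check that the reduction maps an instance of $CRP[\ge 1]$ (resp. $CRP[\ge 1,=0]$) for IO nets to an instance of $CRP[\ge 1]$ (resp. $CRP[\ge 1,=0]$) for RBN. This is immediate: the hypothesis that $(\cube_1,\cube_1')$ is an unbounded initial cube pair (i.e. $L_1(q) = 0$ and $U_1(q) \in \set{0,\infty}$ for all $q$) and the hypothesis on the shape of $\cube_1' = (L_1',U_1')$ --- namely $L_1'(q) \in \set{0,1}$ and $U_1'(q) = \infty$ for all $q$ in the $CRP[\ge 1]$ case, together with $U_1'(q) \in \set{0,\infty}$ in the $CRP[\ge 1,=0]$ case --- carry over word for word to $\cube_2 = \cube_1$ and $\cube_2' = \cube_1'$ viewed as cubes over $Q' = Q$. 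Next I would note that the construction is computable in time linear in $|\net|$, hence is a polynomial-time many-one reduction, and that by the correctness argument already given, $\cube_1$ reaches $\cube_1'$ in $\net$ iff $\cube_2$ reaches $\cube_2'$ in $\RBN$.

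Finally I would invoke Theorem~3.3 and Theorem~4.3 of~\cite{FSTTCS12}, which state that $CRP[\ge 1]$ for RBN is in PTIME and $CRP[\ge 1,=0]$ for RBN is in NP. Composing the polynomial-time reduction with these algorithms yields a polynomial-time decision procedure for $CRP[\ge 1]$ on IO nets and an NP procedure for $CRP[\ge 1,=0]$ on IO nets, which is exactly the statement.

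I do not expect any genuine obstacle: the substantive work (the IO-to-RBN simulation and its correctness) has already been carried out in this section, and the only thing left to verify is the bookkeeping claim that the identity map on cubes respects the $CRP[\ge 1]$ and $CRP[\ge 1,=0]$ restrictions, which it trivially does. The one point deserving a line of justification is that the configuration bijection between $\net$ and $\RBN$ is the identity on $\mathbb{M}(Q)$, so that membership of a configuration in $\cube_i$ or $\cube_i'$ is unaffected by the translation --- and this was already established when the construction was presented.
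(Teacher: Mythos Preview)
Your proposal is correct and follows essentially the same approach as the paper: the paper also derives the theorem directly from the IO-to-RBN construction (which leaves the state set and cubes unchanged), observing that the $CRP[\ge 1]$ and $CRP[\ge 1,=0]$ restrictions are preserved verbatim and then invoking Theorems~3.3 and~4.3 of~\cite{FSTTCS12}. Your write-up merely makes explicit the bookkeeping that the paper summarizes with ``it is then immediately clear.''
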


\paragraph*{Strong simulation.}
Consider the following alternative definition of simulation between models $A$ and $B$ (where a model is to be understood as either an RBN, ASMS or IO net): Given an instance $I$ of model $A$ with states $Q_I$,
there exists an instance $J$ of model $B$ with states $Q_J$ such that $J$ is polynomial in the
size of $I$
with $Q_I  \subseteq Q_J$, and
%
there exists a multiset $h$ over $Q_J\setminus Q_I$ of polynomial size
such that 
$C' \in \poststar(C)$ if and only if $C'\cdot h \in \poststar(C\cdot h)$
for any configurations $C,C'$ of $I$.
Notice that strong simulation is a transitive relation. 
%
%
The simulation constructions 
of this paper
verify this strong definition of simulation.

\begin{restatable}{theorem}{StrongSimAll}
\label{thm:strong-sim}
RBN and ASMS strongly simulate each other. Further, IO nets are strongly simulated by RBN (and hence by ASMS as well).
\end{restatable}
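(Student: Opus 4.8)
The plan is to observe that Theorem~\ref{thm:strong-sim} is an immediate consequence of the three constructions already presented in the paper — the ASMS built from an RBN in Section~\ref{subsec:ASMS-RBN}, the RBN built from an ASMS in Section~\ref{subsec:RBN-ASMS}, and the RBN built from an IO net in Section~\ref{sec:IO} — together with the transitivity of strong simulation noted just before the theorem. For each construction one only has to exhibit the fixed padding multiset $h$ and notice that the correctness equivalence already proved is, verbatim, the equivalence demanded by strong simulation.

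For ``ASMS strongly simulates RBN'' I would take the ASMS $\ASMS$ of Section~\ref{subsec:ASMS-RBN} associated to an RBN $\RBN = (Q_\RBN,\Sigma_\RBN,\delta_\RBN)$; it has size linear in that of $\RBN$, and its configurations are multisets over $Q_\RBN \cup I \cup \Sigma_\RBN \cup \{\#\}$, which contains $Q_\RBN$. Set $h \defeq \multiset{\#}$, a singleton multiset supported on the fresh symbol $\#$, which lies outside $Q_\RBN$ (renaming states and letters apart if necessary). By the definition of ``good configuration'' there, the good configuration $(\widehat{C},\#)$ corresponding to an RBN configuration $C$, read as a multiset, is exactly $C \cdot h$; hence the equivalence established in that section, $C' \in \poststar_\RBN(C)$ iff $(\widehat{C'},\#) \in \poststar_\ASMS(\widehat{C},\#)$, is precisely the required $C' \in \poststar(C) \iff C' \cdot h \in \poststar(C \cdot h)$, with the substance residing in the already-available Normal form lemma for ASMS. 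For ``RBN strongly simulates ASMS'' and ``RBN strongly simulates IO nets'' the padding is even smaller: in the RBN-from-ASMS construction the register value becomes a single token in the corresponding RBN state while the states $[p,a,q]$ and $\overline a$ stay empty, and in the RBN-from-IO construction the configuration is copied unchanged; in both cases $h$ is the empty multiset, the RBN has polynomial size, its states contain those of the source model, and the equivalences proved there ($C' \in \poststar_\ASMS(C)$ iff $\widehat{C'} \in \poststar_\RBN(\widehat{C})$, via the RBN Normal form lemma, and $C' \in \poststar_\net(C)$ iff $C' \in \poststar_\RBN(C)$) are exactly the strong-simulation equivalence with empty $h$. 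Finally, ``ASMS strongly simulates IO nets'' follows by composing the strong simulations of IO nets by RBN and of RBN by ASMS, using transitivity; the composite padding is again $\multiset{\#}$.

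I do not expect a real obstacle: the theorem is a repackaging of results already in hand. The only points deserving a line of justification are that (i) each correspondence $C \mapsto \widehat{C}$ adds the \emph{same} multiset for every $C$ — visible from the definitions of the good configurations — so that a single $h$ suffices uniformly; (ii) these multisets are supported on the states newly introduced by the construction and have constant size; and (iii) no name clash between source states and the fresh letters or states occurs, which is harmless and removed by renaming. I would therefore present the proof simply as: inspect the constructions of Sections~\ref{sec:simulations} and~\ref{sec:IO}, take $h$ as above, and invoke transitivity of strong simulation.
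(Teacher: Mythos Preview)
Your proposal is correct and follows essentially the same approach as the paper's own proof: both simply revisit the three constructions of Sections~\ref{sec:simulations} and~\ref{sec:IO}, identify the fixed padding multiset $h$ (namely $\multiset{\#}$ for the ASMS-from-RBN construction and the empty multiset for the other two), and observe that the already-established reachability equivalences are exactly the strong-simulation condition. Your explicit appeal to transitivity for the ``ASMS strongly simulates IO'' clause is also how the paper handles it (implicitly, via the parenthetical in the theorem statement).
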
 
 
We show that this is not the case for IO nets: they cannot strongly simulate RBN (nor ASMS).

\subsection{IO Does not Strongly Simulate RBN}
Assuming that IO nets can strongly simulate RBN,
we will derive a contradiction.
Under this assumption, 
we will first transfer results on the closure of cubes from IO nets to RBN, 
then exhibit a particular RBN which contradicts these results.
We start by recalling definitions and properties relating to cubes.

\paragraph*{Counting sets and norms.}
We consider cubes over a finite set $Q$.
A
finite union of cubes $\bigcup_{i=1}^m (L_i,U_i)$ is called a \emph{counting constraint}
and the set of configurations $\bigcup_{i=1}^m \cube_i$ it describes is called a \emph{counting set}.
We write $\sem{\cC}$ for the counting set described by the counting constraint $\cC$.
Notice that two different counting constraints may describe the same counting set.
For example, let $Q=\set{q}$ and let $(L,U)=(1,3)$, $(L',U')=(2,4)$, $(L'',U'')=(1,4)$. 
The counting constraints $(L,U)\cup(L',U')$ and $(L'',U'')$ define the same counting set.
It is easy to show (see also Proposition 2 of \cite{EsparzaGMW18})
that counting constraints and counting sets are closed under Boolean operations.

Let $\cube=(L,U)$ be a cube.
Let $\lnorm{\cube}$ be the the sum of the components of $L$.
Let $\unorm{\cube}$ be the sum of the finite components of $U$ if there are any, and $0$ otherwise.
We call \emph{norm} of $\cube$ the maximum of $\lnorm{\cube}$ and $\unorm{\cube}$, denoted by $\norm{\cube}$.
We define the norm of a counting constraint $\cC= \bigcup_{i=1}^m \cube_i$ as
$\norm{\cC} \defeq \displaystyle \max_{i\in [1,m]} \{ \norm{\cube_i} \}$.
The norm of a counting set $\cSet$ is the smallest norm of a counting constraint representing $\cSet$, that is, 
$\norm{\cSet} \defeq \displaystyle \min_{\cSet = \sem{\cC}} \{ \norm{\cC} \}$.
Proposition 5 of~\cite{EsparzaGMW18} entails the following results for the norms of the union, intersection and complement.
\begin{prop}%
\label{prop:oponconf}
Let $\cSet_1, \cSet_2$ be counting sets.
The norms of the union, intersection and complement satisfy:
$\norm{\cSet_1 \cup \cSet_2} \leq \max \{\norm{\cSet_1}, \norm{\cSet_2} \}$,
$\norm{\cSet_1 \cap \cSet_2} \leq \norm{\cSet_1} + \norm{\cSet_2}$
and
$\norm{\N^n \setminus \cSet_1} \leq \norm{\cSet_1} + \norm{\cSet_2}$.
%
%
%
\end{prop}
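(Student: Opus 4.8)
The plan is to establish each of the three bounds by fixing, for $\cSet_1$ and $\cSet_2$, counting constraints $\cC_1 = \bigcup_{i} \cube_i$ and $\cC_2 = \bigcup_{j} \cube'_j$ that realize the norms, i.e. $\norm{\cC_1} = \norm{\cSet_1}$ and $\norm{\cC_2} = \norm{\cSet_2}$, and then tracking how the Boolean operation acts cube-by-cube. Two elementary facts do most of the work: (i) a finite union of cubes each of norm at most $k$ is a counting constraint of norm at most $k$; and (ii) for cubes $\cube = (L,U)$ and $\cube' = (L',U')$ the intersection $\cube \cap \cube'$ is again a cube, namely $(\max(L,L'),\min(U,U'))$ componentwise (discarding it if it is empty).

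The union bound is then immediate: $\cC_1 \cup \cC_2$ is a counting constraint for $\cSet_1 \cup \cSet_2$ of norm $\max\{\norm{\cC_1},\norm{\cC_2}\} = \max\{\norm{\cSet_1},\norm{\cSet_2}\}$, so $\norm{\cSet_1\cup\cSet_2}\le\max\{\norm{\cSet_1},\norm{\cSet_2}\}$. For the intersection, $\cSet_1\cap\cSet_2 = \bigcup_{i,j}(\cube_i\cap\cube'_j)$, so by (i) it suffices to bound $\norm{\cube\cap\cube'}$ for single cubes. The lower-bound sum is $\sum_k \max(L_k,L'_k) \le \sum_k(L_k+L'_k) = \lnorm{\cube}+\lnorm{\cube'}$, and the finite-upper-bound sum $\sum_{k:\,\min(U_k,U'_k)<\infty}\min(U_k,U'_k)$ is at most $\unorm{\cube}+\unorm{\cube'}$, since each coordinate $k$ with finite $\min(U_k,U'_k)$ contributes a value bounded by whichever of $U_k,U'_k$ is finite, and may be charged to exactly one of the two sums. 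Using $\max(a+c,b+d)\le\max(a,b)+\max(c,d)$ gives $\norm{\cube\cap\cube'}\le\norm{\cube}+\norm{\cube'}\le\norm{\cSet_1}+\norm{\cSet_2}$, and (i) lifts this to the union over $i,j$.

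The complement bound is the delicate point, and here I would invoke Proposition~5 of~\cite{EsparzaGMW18} rather than redo its combinatorics. The starting observation is that the complement of a single cube $\cube=(L,U)$ over $\N^n$ is the union of at most $2n$ ``one-coordinate'' cubes: for each coordinate $k$, the cube $\{x : x_k < L_k\}$ and, when $U_k<\infty$, the cube $\{x : x_k > U_k\}$; each has norm controlled by $\norm{\cube}$ (up to the additive shift produced by turning a strict inequality into a bound, e.g. lower bound $U_k+1$). For a general counting set $\cSet_1 = \bigcup_i \cube_i$ one writes $\N^n\setminus\cSet_1 = \bigcap_i(\N^n\setminus\cube_i)$ and distributes the intersection over these one-coordinate unions. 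The key is that the resulting conjunctions are conjunctions of one-coordinate constraints, so after simplification each coordinate carries at most one lower and one upper bound, and the norm stays governed by $\norm{\cSet_1}$; the naive route of iterating the intersection bound over the cubes $\cube_i$ would instead blow the norm up with their number. Performing this simplification uniformly is exactly what Proposition~5 of~\cite{EsparzaGMW18} provides, and the stated inequality for $\norm{\N^n\setminus\cSet_1}$ follows from it (the right-hand side being nonnegative, any additive adjustment from the strict-to-nonstrict shift is absorbed).

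The main obstacle is therefore the complement case, and specifically the bookkeeping that keeps the norm from depending on the number of cubes in a representation of $\cSet_1$; the union and intersection bounds are routine once the cube-wise estimates above are in place.
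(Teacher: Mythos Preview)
Your proposal is correct and aligned with the paper's treatment: the paper does not actually give a proof of this proposition but simply records it as a consequence of Proposition~5 of~\cite{EsparzaGMW18}, which is precisely the reference you invoke for the complement case. Your explicit cube-by-cube arguments for union and intersection are fine (and more detailed than anything the paper provides), and for the complement you correctly identify that the nontrivial content---keeping the norm independent of the number of cubes in a representation of $\cSet_1$---is exactly what is borrowed from~\cite{EsparzaGMW18}.
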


The following result for IO nets is deduced directly from Theorem 6 in \cite{EsparzaRW19}.
It states that the forward and backward reachability set of a counting set is still a counting set, and bounds its norms polynomially.
This result, transferred to RBN under the assumption of a strong simulation,
will amount to a contradiction.

\begin{theorem}
\label{thm:ccreach-io}
Let $\net=(Q,\delta)$ be an IO net, and let $\cSet$ be a counting set of $\net$.
Then $\poststar(\cSet)$ is also a counting set and
$
\norm{\poststar(\cSet)} \leq  \norm{\cSet} + |Q|^3.
$
The same holds for $\prestar(\cSet)$.
\end{theorem}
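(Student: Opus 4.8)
The plan is to obtain the $\poststar$ part of the statement from the corresponding single‑cube bound of~\cite{EsparzaRW19}, lift it to counting sets using that $\poststar$ commutes with unions together with Proposition~\ref{prop:oponconf}, and then reduce the $\prestar$ part to the $\poststar$ part by reversing the net.

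\emph{The case of $\poststar$.} Theorem~6 of~\cite{EsparzaRW19} yields, for any cube $\cube$ of an IO net $\net$, that $\poststar(\cube)$ is a counting set with $\norm{\poststar(\cube)} \le \norm{\cube} + |Q|^3$ (if that theorem is already phrased for counting sets rather than cubes, the lifting below is unnecessary). Now let $\cSet$ be an arbitrary counting set of $\net$ and fix a counting constraint $\cC = \bigcup_{i=1}^{m} \cube_i$ with $\sem{\cC} = \cSet$ and $\norm{\cC} = \norm{\cSet}$; then $\norm{\cube_i} \le \norm{\cSet}$ for every $i$. Directly from the definition of reachability, $\poststar$ distributes over unions, so $\poststar(\cSet) = \bigcup_{i=1}^{m} \poststar(\cube_i)$, a finite union of counting sets and hence itself a counting set. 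By Proposition~\ref{prop:oponconf} and the single‑cube bound,
\[
\norm{\poststar(\cSet)} \le \max_{i \in [1,m]} \norm{\poststar(\cube_i)} \le \max_{i \in [1,m]} \bigl( \norm{\cube_i} + |Q|^3 \bigr) \le \norm{\cSet} + |Q|^3 .
\]

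\emph{The case of $\prestar$.} Here I would exploit that IO nets are closed under reversal. Let $\net^{\mathrm{rev}} = (Q, \delta^{\mathrm{rev}})$ with $\delta^{\mathrm{rev}} = \{ (p',q,p) : (p,q,p') \in \delta \}$; this is again an IO net over the \emph{same} set of states $Q$. Writing $t^{\mathrm{rev}} = p' \trans{q} p$ for a transition $t = p \trans{q} p'$, a short case analysis according to which of $p$, $q$, $p'$ coincide shows that $C \trans{t} C'$ is a step of $\net$ if and only if $C' \trans{t^{\mathrm{rev}}} C$ is a step of $\net^{\mathrm{rev}}$; by induction on the length of the run, $C \trans{*}_{\net} C'$ iff $C' \trans{*}_{\net^{\mathrm{rev}}} C$. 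Consequently $\prestar_{\net}(\cSet) = \poststar_{\net^{\mathrm{rev}}}(\cSet)$, and applying the $\poststar$ part to $\net^{\mathrm{rev}}$ shows this is a counting set of norm at most $\norm{\cSet} + |Q|^3$.

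The only genuinely delicate point is the step‑reversal claim: one must check that $C' \trans{t^{\mathrm{rev}}} C$ is enabled whenever $C \trans{t} C'$ was, which in the degenerate cases $q = p$ or $q = p'$ relies on the enabling condition $C \ge \multiset{p,q}$ forcing at least two tokens in the relevant place (and, more conceptually, on the observed token surviving the firing — exactly the feature that makes IO transitions reversible). Everything else — distributivity of $\poststar$ over unions, closure of counting sets under finite union, and the norm arithmetic — is routine given Proposition~\ref{prop:oponconf}, so no machinery beyond Theorem~6 of~\cite{EsparzaRW19} is required.
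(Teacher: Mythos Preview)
Your proposal is correct and matches the paper's own treatment: the paper does not give a proof of this theorem at all but simply states that it ``is deduced directly from Theorem~6 in~\cite{EsparzaRW19}.'' Your write-up just spells out that deduction---the lift from a single cube to counting sets via $\poststar(\bigcup_i \cube_i)=\bigcup_i\poststar(\cube_i)$ and Proposition~\ref{prop:oponconf}, and the reversal argument for $\prestar$---more carefully than the paper bothers to, and each of these steps is sound (in particular your case analysis for step reversal is right: the observed token in state $q$ survives the firing, so the reversed transition is always enabled at $C'$).
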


Assuming that IO nets strongly simulate RBN, we can transfer the result of 
Theorem \ref{thm:ccreach-io} to RBN.
 
\begin{restatable}{theorem}{ThmCCReachRBN}
\label{thm:ccreach-rbn}
Assume that IO nets can strongly simulate RBN.
There exists a constant $k$ such that 
for any RBN $\RBN=(Q,\Sigma,\delta)$, 
for any counting set $\cSet$ of $\RBN$,
 $\poststar(\cSet)$ is also a counting set and 
$
\norm{\poststar(\cSet)}  \in O( \norm{\cSet}  + |Q|)^{k}.
$
The same holds for $\prestar(\cSet)$.
\end{restatable}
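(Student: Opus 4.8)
The plan is to transfer Theorem~\ref{thm:ccreach-io} from IO nets to RBN through the assumed strong simulation. The conceptual point is that a strong simulation conjugates the reachability relation by a fixed offset multiset $h$, and conjugation by a fixed offset on a disjoint block of coordinates commutes with the operation ``take the forward (resp.\ backward) reachability set of a counting set'', affecting the norm only additively, by an amount controlled by $|h|$ and $|Q'|$.

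Fix an RBN $\RBN = (Q,\Sigma,\delta)$ and a counting set $\cSet$ of $\RBN$. Under the assumption, strong simulation yields an IO net $\net = (Q',\delta')$ with $Q \subseteq Q'$, $|Q'|$ polynomial in the size of $\RBN$, and a multiset $h$ over $Q'\setminus Q$ of polynomial size, such that $C \trans{*} D$ in $\RBN$ iff $C\cdot h \trans{*} D\cdot h$ in $\net$, for all configurations $C,D$ of $\RBN$. First I would \emph{lift} $\cSet$: let $\cSet\cdot h$ be the set of all $C\cdot h$ with $C\in\cSet$. If $\cSet = \sem{\bigcup_i(L_i,U_i)}$ with $\norm{\cSet}$ realised by this constraint, then $\cSet\cdot h = \sem{\bigcup_i(\tilde L_i,\tilde U_i)}$, where $\tilde L_i,\tilde U_i$ coincide with $L_i,U_i$ on $Q$ and both equal $h$ on $Q'\setminus Q$; since $h$ adds $|h|$ to the lower-bound sum and to the (now necessarily nonempty) finite-upper-bound sum of each cube, $\norm{\cSet\cdot h}\le \norm{\cSet}+|h|$. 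Applying Theorem~\ref{thm:ccreach-io} to $\net$ and $\cSet\cdot h$ gives that $\poststar_{\net}(\cSet\cdot h)$ is a counting set with $\norm{\poststar_{\net}(\cSet\cdot h)}\le \norm{\cSet}+|h|+|Q'|^{3}$.

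Next I would \emph{project back}. Because every element of $\cSet\cdot h$ has the form $C\cdot h$ with $C\in\cSet$, the simulation equivalence yields, after checking both inclusions, the exact identity $\poststar_{\RBN}(\cSet) = \{\,D\in\mathbb{M}(Q)\ :\ D\cdot h\in\poststar_{\net}(\cSet\cdot h)\,\}$. It then suffices to note that, for any counting set $\cSet'$ over $Q'$, the set $\{D\in\mathbb{M}(Q): D\cdot h\in\cSet'\}$ is again a counting set of no larger norm: from a constraint $\bigcup_j(L_j',U_j')$ for $\cSet'$, discard each cube whose restriction to $Q'\setminus Q$ does not contain $h$ (i.e.\ fails $L_j'(q)\le h(q)\le U_j'(q)$ somewhere on $Q'\setminus Q$), and restrict each surviving cube to $Q$; restriction to a subset of coordinates can only shrink the lower-bound sum and the finite-upper-bound sum, so the norm does not increase. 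Applying this with $\cSet' = \poststar_{\net}(\cSet\cdot h)$ shows $\poststar_{\RBN}(\cSet)$ is a counting set with $\norm{\poststar_{\RBN}(\cSet)}\le \norm{\cSet}+|h|+|Q'|^{3}$. Since $|h|$ and $|Q'|$ are bounded by a fixed polynomial in the size of $\RBN$, the right-hand side is polynomial in $\norm{\cSet}$ and the size of $\RBN$, which is the stated bound. The $\prestar$ case is identical, invoking the $\prestar$-part of Theorem~\ref{thm:ccreach-io} and the fact that $D\trans{*}C$ in $\RBN$ iff $D\cdot h\trans{*}C\cdot h$ in $\net$.

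The main obstacle is the project-back step, and it is a matter of care rather than difficulty: one must verify that $\poststar_{\RBN}(\cSet)=\{D: D\cdot h\in\poststar_{\net}(\cSet\cdot h)\}$ holds in \emph{both} directions, so that configurations of $\net$ lying outside the lifted range do not contaminate the projection, and that cutting cubes down to $Q$ truly does not inflate the norm (in particular the edge case where a cube's only finite upper bounds sit on $Q'\setminus Q$). The remaining ingredients --- the lift, the single application of Theorem~\ref{thm:ccreach-io}, and the bookkeeping of the additive $|h|$ and $|Q'|^{3}$ terms --- are routine, and the polynomial-size guarantee built into the definition of strong simulation is exactly what keeps the final bound polynomial.
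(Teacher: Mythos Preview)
Your proposal is correct and follows essentially the same route as the paper: lift via $C\mapsto C\cdot h$, apply Theorem~\ref{thm:ccreach-io} in the IO net, and project back by restricting each resulting cube to $Q$ after discarding cubes whose $Q'\setminus Q$-part excludes $h$. Your treatment of the norm bookkeeping (the additive $|h|$ on lifting and the non-increase on restriction) is in fact more explicit than the paper's, which handles these through the bijection $b$ and its inverse on ``good'' configurations but leaves the quantitative tracking terse.
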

\begin{proof}[Proof Sketch.]
\newcommand{\image}{\mathcal{G}}
\newcommand{\complicated}{\mathcal{M}}

It suffices to show the result for $\cSet$ a cube, 
since for a counting set $\cup_{i=1}^n \cube_i$, 
we have $\poststar(\cup_i \cube_i)=\cup_i  \poststar(\cube_i)$.
Fix an RBN $\RBN=(Q,\Sigma,\delta)$ and a cube $\cube$ over $Q$.
Let $\net=(Q_\net, \delta_\net)$ be the IO net of the strong simulation whose existence we assume. 
The definition of strong simulation 
entails the existence of a bijection $b$ from configurations of $\RBN$ to 
a subset $\image$ of ``good" configurations of  $\net$.
The bijection verifies that a cube of $\RBN$ is
mapped to a cube of $\net$, and 
that a cube of $\net$ 
restricted to configurations of $\image$
is mapped to a cube of $\RBN$.

Since $b$ preserves cubes, $b(\cube)$ is a cube.
By Theorem \ref{thm:ccreach-io}, $\poststar(b(\cube))$ is a counting set,
and thus there exist cubes $\cube_1, \ldots, \cube_n$ of $\net$ 
such that $\poststar(b(\cube))=\cup_{i=1}^n \cube_i$.
Let $\complicated$ be the set $\cup_{i=1}^n b^{-1}(\cube_i|_\image)$ of $\RBN$.
We show that $\poststar(\cube)=\complicated$.
Since the $b^{-1}(\cube_i|_\image)$ are cubes by strong simulation,
$\poststar(\cube)$ is a counting set as a union of cubes.
The size of $\poststar(b(\cube))$ is polynomial in $\cube$ 
and $\RBN$ by Theorem  \ref{thm:ccreach-io},
and thus the size of $\poststar(\cube)$ is too.
\end{proof}


\paragraph*{Deriving the contradiction.}
We now exhibit a contradiction to the result of Theorem 
\ref{thm:ccreach-rbn}, thus proving that IO nets do not strongly simulate RBN.
Recall the RBN represented in Figure \ref{fig:rbn}.
We can generalize it to a family of RBN $\RBN_n=(Q,\Sigma,\delta)$, 
parameterized by $n\ge 1$,
with set of states $\set{tok, sent} \cup \set{a_i,b_i,c_i | 1 \le i \le n}$.
Let $\cube_0$ be the cube in which there are arbitrarily many agents in $tok$,
exactly one agent in each $a_i$ and $0$ agents in the other states.
Let $\cube_f$ be the cube in which there is a least one agent in $c_n$
and an arbitrary number elsewhere.
We claim that if we start from a configuration of $\cube_0$,
we can only reach $\cube_f$ if we initially have $2^n$ or more agents in $tok$.
Indeed we can show by  induction on $i \in \set{1,\ldots,n}$
that $1$ must be broadcasted $2^i$ times to reach $c_i$,
and thus that  $2^i$ agents are needed in $tok$ 
initially to reach $c_i$. 
By Proposition \ref{prop:oponconf} and Theorem \ref{thm:ccreach-rbn},
the set $S := \poststar(\cube_0) \cap \cube_f$ is a counting set of size at most polynomial in $|Q|, \norm{\cube_0}$ and  $\norm{\cube_f}$.
The cubes $\norm{\cube_0}$ and  $\norm{\cube_f}$ have norms 
$n$ and $1$  respectively, so  $S$ is of norm polynomial in $n$.
Thus if it is non-empty 
it must contain a configuration of size at most polynomial in $n$:
simply take the configuration equal to the lower bounds $L$ of one of the cubes whose union is the counting set $\poststar(\cube_0) \cap \cube_f$.
This contradicts the fact that $2^n$ agents are needed to reach $\cube_f$.


\paragraph*{Acknowledgements:} 
We would like to thank Javier Esparza and the anonymous reviewers for their useful feedback.

\nocite{*}
\bibliographystyle{eptcs}
\bibliography{refs}

\appendix

\section{Appendix for Section \ref{sec:simulations}}

\LmNormalASMS*
\begin{proof}
    Let $n$ be  the length of $\rho$.
    We proceed by induction on $n$. If $n = 0$, we are done.
    Let $n > 0$ and $\rho = \rho_1,\dots,\rho_n$. Assume now that any run of length strictly less than $n$ can be put in normal form. Notice that $\rho$ begins
    at $(\widehat{C},\#)$ and there is no transition which reads the value $\#$ and all transitions
    which write $\#$ can only do so from an intermediary state. Hence, it must be the 
    case that $\rho_1 := q \act{W(a)} [q,a,q']$ for some $a \in \Sigma$. 
    We now consider two cases:
    
    \textbf{Case 1: } Suppose there is no $i > 1$ such that $\rho_i$ is a transition which
    writes a value $b \neq \#$.  Hence, every transition in $\rho_2,\dots,\rho_n$ either reads the value $a$ or writes $\#$, and since $a$ cannot be read after $\#$ has been written, 
    there must be an index $2 \le j \le n$ such that every transition in $\rho_2,\dots,\rho_{j-1}$
    reads $a$ and every transition in $\rho_j, \dots, \rho_n$ writes $\#$. Now, notice that $\rho$
    begins and ends at good configurations and writing and reading the value $a$ makes a process move 
    into an intermediary state and writing $\#$ makes a process move out of an intermediary 
    state. With these three points, it can then be easily seen that 
    $(\widehat{C},\#) \act{\rho} (\widehat{C'},\#)$ must be a pseudo-step.
    
    \textbf{Case 2: } Suppose there is $i > 1$ such that $\rho_i$ is a transition which writes
    a value $b \neq \#$. By the same argument as before, it is easy to see that 
    there must exist $2 \le j \le i-1$ such that every transition in $\rho_2,\dots,\rho_{j-1}$
    reads $a$ and every transition in $\rho_j, \dots, \rho_{i-1}$ writes $\#$. Let $Z$
    be the configuration reached after $\rho_{i-1}$.
    Let $M = Z(I)$, i.e., $M$ is the multiset of processes at the configuration $Z$
    which are in some intermediary state. Since the only way out of intermediary states is to
    write $\#$ onto the register, if $M = \multiset{[p_1,a,p_1'],\dots,[p_k,a,p_k']}$ then
    there must exist $i < i_1,\dots,i_k$ such that each $\rho_{i_l}$ is $[p_l,a,p_l'] \act{W(\#)} p_l'$. Hence, we can rearrange the run as follows: Let $\rho' := \rho_{i},\dots,\rho_{i_1-1},\rho_{i_1+1},\dots,\rho_{i_2-1},\rho_{i_2+1},\dots,\rho_{i_k-1},\rho_{i_k+1},\dots,\rho_n$. Then it is easy to see that the following is a valid run in $\ASMS$:
    $(\widehat{C},\#) \act{\rho_1,\dots,\rho_{i-1}} Z \act{\rho_{i_1},\rho_{i_2},\dots,\rho_{i_k}} Z' \act{\rho'} (\widehat{C'},\#)$. Since the run between $(\widehat{C},\#)$ and $Z'$ is a pseudo-step,
    by applying induction hypothesis on the rest of the run, we are done.
\end{proof}


\LmNormalRBN*
\begin{proof}
	Suppose $Z_0 := Z \trans{b^1 + r_1^1,\dots,r_{n_1}^1} Z_1 \trans{b^2 + r_1^2,\dots,r_{n_2}^2} Z_2 \dots Z_{m-1} \trans{b^m + r_1^m,\dots,r_{n_m}^m} Z_m := Z'$.
	We proceed by induction on $m$. The case of $m = 0$ is trivial.
	
	Suppose $m > 0$ and assume that the claim is true for all numbers less than $m$. 
	Since $Z_0$ is a good configuration, there are only two possible choices for $b^1$.
	
	\textbf{Case 1: } Suppose $b^1 = a \trans{!Read_a} a$ for some $a \in \Sigma$.
	If $n_1 = 0$, then $Z_1 = Z_0$ and by induction hypothesis, we are done.
	Suppose $n_1 > 0$. For each $1 \le i \le n_1$, let $r_i^1 = (p_i,?Read_a,q_i)$.
	Replace the transition between $Z_0$ and $Z_1$ with,
	$Z_0 \trans{b^1 + r_1^1} C_1 \trans{b^1 + r_2^1} C_2 \trans{b^1 + r_3^1} \dots 
	C_{n_1-1} \trans{b^1 + r_{n_1}^1} Z_1$. Now, $Z_1$ is a good configuration,
	and applying the induction hypothesis on the run between $Z_1$ and $Z'$, we are done.

	\textbf{Case 2: } Suppose $b^1 = d \trans{!Ch_a} \overline{a}$ for some $d,a \in \Sigma$.
	Hence, $Z_1$ is not a good configuration and so $Z_1 \neq Z'$. If $n_1 = 0$, then no process in the configuration $Z_1$ can broadcast any message,
	and so no step from $Z_1$ is possible which leads to a contradiction. Hence, we can assume that $n_1 > 0$.
	
	For each $1 \le i \le n_1$, let $r_i^1 = (p_i,?Ch_a,[p_i,a,q_i])$.
	Notice that the only processes which can broadcast a message at the configuration $Z_1$
	are the processes in the multiset $\multiset{[p_1,a,q_1],\dots,[p_{n_1},a,q_{n_1}]}$.
	Hence $b^2 = (p_i[a]q_i,!Ack_a,q_i)$ for some $i$. Without loss of generality, we can assume that $i = 1$.
	
	Notice that the only process which can receive the message $Ack_a$ at the configuration $Z_1$
	is the process at the state $\overline{a}$.
	Depending on whether this process receives the message at $Z_1$, we either have $n_2 = 0$ or $n_2 = 1$. 
	Hence, we get two subcases:
	
	\textbf{Case 2a): } Suppose $n_2 = 0$. Let $S := \sum_{i=2}^{n_1} \vec{p_i} - \sum_{i=2}^{n_1} \vec{[p_i,a,q_i]}$.
	Then reorder the run between $Z_0$ and $Z_2$ as follows:
	$Z_0 \trans{b^1 + r_1^1} Z_1 - S \trans{b^2 + (\overline{a},?Ack_a,a)} Z_2 - S - \overline{\vec{a}} + \vec{a} 
	\trans{(a,!Ch_a,\overline{a}) + r_2^1,\dots,r_{n_1}^1} Z_2$. Notice that the 
	configuration $Z_2 - S - \overline{\vec{a}} + \vec{a}$ is a good configuration and 
	has a run of length $m-1$ to $Z'$. Applying induction hypothesis, we are then done.
	
	\textbf{Case 2b): } Suppose $n_2 = 1$. Hence, $r_1^2 = (\overline{a},?Ack_a,a)$ and so
	$Z_2(\overline{a}) = 0$ and $Z_2(a) = 1$.
	We consider two further subcases:
	\begin{itemize}
		\item Suppose there exists $\alpha > 2$ such that $Z_{\alpha}(\overline{a}) = 1$.
		Let $\alpha$ be the minimum such index. Hence, there must exist some $d' \in \Sigma$ such that $b^{\alpha}$ is $(d',!Ch_a,\overline{a})$.
		Consider the run $Z_2 \trans{b^3 + r_1^3,\dots,r_{n_3}^3} Z_3 \trans{b^4 + r_1^4,\dots,r_{n_4}^4} \dots Z_{\alpha-1} \trans{b^{\alpha} + r_1^\alpha, \dots,r_{n_\alpha}^\alpha} Z_\alpha$.
		
		Suppose no $b \in \{b^3,\dots,b^{\alpha-1}\}$ is labelled by $!Ack_a$. Observe that $Z_2 \ge \sum_{i=2}^{n_1} \vec{[p_i,a,q_i]}$ and 
		the only way to move a process out of $[p_i,a,q_i]$ is to broadcast the message $!Ack_a$. Hence none of the processes
		in the states $[p_i,a,q_i]$ are involved in any of the transitions between $Z_2$ and $Z_{\alpha}$ and so $Z_j \ge \sum_{i=2}^{n_1} \vec{[p_i,a,q_i]}$ for every $2 \le j \le \alpha$.
		So, we can rearrange the run between $Z_0$ and $Z_\alpha$ as follows: Let $S := \sum_{i=2}^{n_1} \vec{p_i} - \sum_{i=2}^{n_1} \vec{[p_i,a,q_i]} $.
		We have $Z_0 \trans{b^1 + r_1^1} Z_1 - S \trans{b^2 + r_1^2} Z_2 - S \trans{b^3 + r_1^3,\dots,r_{n_3}^3} Z_3 - S \trans{b^4 + r_1^4,\dots,r_{n_4}^4}
		\dots Z_{\alpha-1} - S\trans{b^\alpha, r_1^\alpha, \dots, r_{n_\alpha}^\alpha, r_2^1, \dots, r_{n_1}^1} Z_\alpha$. 
		The configuration $Z_2 - S$ is a good configuration and has a run to $Z'$ of length $m-2$. Applying induction hypothesis, we are then done.
		
		Suppose some $b \in \{b^3,\dots,b^{\alpha-1}\}$ is labelled by $!Ack_a$. Let $b = b^i$ be the first such transition.
		Notice that the only processes in the configurations $Z_2,\dots,Z_{\alpha-1}$ capable of broadcasting the message $Ack_a$
		are the processes in the multiset $\multiset{[p_2,a,q_2],\dots,[p_{n_1},a,q_{n_1}]}$. Hence, $b^i = ([p_j,a,q_j],!Ack_a,q_j)$ for some $j$.
		Without loss of generality let $b^i = ([p_2,a,q_2],!Ack_a,q_2)$. Notice that no process in the configurations $Z_2,\dots,Z_{\alpha-1}$
		receives the message $Ack_a$. (The only state capable of receiving that message is at $\overline{a}$, but by our assumption the first time
		a process at state $\overline{a}$ receives the message $Ack_a$ is at index $\alpha > i$.)
		Hence $n_i = 0$. Therefore $Z_i \trans{([p_2,a,q_2],!Ack_a,q_2)} Z_{i+1}$. With this in mind,
		we can rearrange the run between $Z_0$ and $Z_i$ as follows: Let $S := \sum_{i=2}^{n_1} \vec{p_i} - \sum_{i=2}^{n_1} \vec{[p_i,a,q_i]}$.
		We have $Z_0 \trans{b^1 + r_1^1} Z_1 - S \trans{b^2 + r_1^2} Z_2 - S
		\trans{(a,!Ch_a,\overline{a}) + r_2^1,\dots,r_{n_1}^1} Z_2 - \vec{a} + \overline{\vec{a}} 
		\trans{([p_2,a,q_2],!Ack_a,q_2) + (\overline{a},?Ack_a,a)} Z_2 - \vec{[p_2,a,q_2]} + \vec{p_2}
		\trans{b^3 + r_1^3,\dots,r_{n_3}^3} Z_3 -  \vec{[p_2,a,q_2]} + \vec{p_2} \trans{b^4 + r_1^4,\dots,r_{n_4}^4} 
		\dots Z_{i-2} - \vec{[p_2,a,q_2]} + \vec{p_2} \trans{b^{i-1} + r_1^{i-1} \dots r_{n_{i-1}}^{i-1}} Z_i$.
		Notice that $Z_2 - S$ is a good configuration and has a run to $Z'$ of length $<m$. Applying the induction hypothesis to this run, we are done.
		
		\item Suppose there does not exist $\alpha > 2$ such that $Z_{\alpha}(\overline{a}) = 1$.
		Consider the run $Z_2 \trans{b^3 + r_1^3,\dots,r_{n_3}^3} Z_3 \trans{b^4 + r_1^4,\dots,r_{n_4}^4} \dots Z_{m-1} \trans{b^{m} + r_1^m, \dots,r_{n_m}^m} Z_m$.
		
		Suppose no $b \in \{b^3,\dots,b^{m}\}$ is labelled by $!Ack_a$. By the same argument as the previous case,
		we can conclude that for every $2 \le j \le m$, we have $Z_j \ge \sum_{i=2}^{n_1} \vec{[p_i,a,q_i]}$.
		However $Z_m$ is a good configuration. Hence, it must be the case that $n_1 = 1$. If $n_1 = 1$, notice that
		the configuration $Z_2$ is a good configuration and the run between $Z_0$ and $Z_2$ is already in normal form.
		Because of the induction hypothesis, we are done.
		 
		Suppose some $b \in \{b^3,\dots,b^{m}\}$ is labelled by $!Ack_a$. Let $b = b^i$ be the first such transition.
		Similar to the previous case, we can show that the step after $Z_i$ should be of the form:
		$Z_i \trans{([p_2,a,q_2],!Ack_a,q_2)} Z_{i+1}$. Hence, we can rearrange the run between $Z_0$ and $Z_i$ in the same way as
		before and conclude by applying the induction hypothesis.
	\end{itemize}
\end{proof}

\section{Appendix for Section \ref{sec:consequences}}

\ThmLeaderRBN*
\begin{proof}
For the upper bound, notice that given an RBN-leader cube, our reduction in section~\ref{subsec:ASMS-RBN}
produces an ASMS-leader cube. Since the reachability problem for ASMS-leader cubes is in \textsf{NP}~\cite{JACM16}, this immediately gives us the same upper bound for RBN-leader cube reachability.

For the lower bound, we give a reduction from 3-SAT.
Let $\varphi = \bigwedge_{i=1}^m C_i$ be a 3-CNF formula over the variables $x_1,\dots,x_n$ where each $C_i = \ell_i^1 \lor \ell_i^2 \lor \ell_i^3$. Construct an RBN-leader protocol as follows : 
The leader will have $n+m+1$ states $q_0,q_1,\dots,q_n,p_1,\dots,p_m$ and the contributors will have $2n+1$ states $init,y_1,\bar{y_1},y_2,\bar{y_2},\dots,y_n,\bar{y_n}$.
For every $1 \le i \le n$, from the state $q_{i-1}$, the leader can either broadcast the message $\top_i$ or $\bot_i$ and move to $q_{i}$. 
These two transitions intuitively correspond to the leader guessing that the variable $x_i$ is either true or false.
From the state $init$, a contributor can move to $y_i$ if it receives the message $\top_i$ or move to $\bar{y_i}$ if it receives the message $\bot_i$. Intuitively, for each $i$, some contributor
receives the guess made by the leader and stores it in its finite set of states so that
it can send it to the leader later on.

For every $1 \le i \le n$, there is a self-loop at the state $y_i$ (resp. $\bar{y_i}$), which can broadcast the message $x_i$ (resp. $\bar{x_i}$). This corresponds to the contributors broadcasting to the leader
the value guessed by it for the $i^{th}$ variable.
Denoting the state $q_n$ by $p_0$, for every $1 \le i \le m$, from the state $p_{i-1}$, the leader can receive any one of the messages $\ell_i^1, \ell_i^2, \ell_i^3$ and move to $p_i$. Hence, the
leader can move to $p_i$ from $p_{i-1}$ iff the guesses that it made before, satisfy the $i^{th}$ clause.

If we now set initially the leader must start at $q_0$ and end up at $p_m$ and the contributors must start at $init$, then it is clear from construction that the RBN-leader reachability problem for this leader protocol is true iff $\varphi$ is satisfiable. 
\end{proof}

\ThmASCoverabilityRBN*
\begin{proof}
We consider the reduction given in section~\ref{subsec:ASMS-RBN}, which given an RBN $\RBN$ 
constructs an ASMS $\ASMS$ which contains all the states of $\RBN$. We notice the following points regarding that construction.
\begin{itemize}
    \item \emph{Remark 1: } To every configuration $C$ of $\RBN$, our construction uniquely identifies a configuration $\hat{C}$ of $\ASMS$ (the set of ``good'' configurations) such that $\hat{C}(q) = C(q)$ for every state $q$ of $\RBN$ and 
	$C \act{*} C'$ in $\RBN$ iff $\hat{C} \act{*} \hat{C'}$ in $\ASMS$. 
	Moreover, the size of $C$ and $\hat{C}$ are the same.
	\item \emph{Remark 2: } If $\hat{C} \act{*} C'$ is a run in $\ASMS$ for some configuration $C'$ which is not good, then it is possible to extend this run to a good configuration $\hat{C} \act{*} C' \act{*} \hat{C''}$ such that $\hat{C''}(q) \ge C'(q)$ for every state $q$ of $\RBN$. (Indeed, as long as there is a process in some intermediate state $[p,a,p']$, we can use the $[p,a,p'] \act{W(\#)} p'$ transition to move the process into the state $p'$).
\end{itemize}

With these remarks, we now give the proof of the theorem. Let $\RBN$ be an RBN with two states
$q_I$ and $q_f$. For any $k \ge 1$,  let $C^k = \multiset{k \cdot q_I}$ be a configuration of $\RBN$.
Let $\ASMS$ be the ASMS constructed using the reduction from $\RBN$ in section~\ref{subsec:ASMS-RBN}.
By construction, it is easy to see that $\hat{C^k}$ is simply the configuration $(\multiset{k \cdot q_I},\#)$.

We now claim that,
\begin{quote}
	For any $k \ge 1$, $\post^{*}_{\RBN}(C^k) \subseteq \pre^{*}_{\RBN}(\uparrow q_f)$ iff $\post^{*}_{\ASMS}(\hat{C^k}) \subseteq \pre^{*}_{\ASMS}(\uparrow q_f)$
\end{quote}

Indeed, suppose $\post^{*}_{\RBN}(C^k) \subseteq \pre^{*}_{\RBN}(\uparrow q_f)$. To prove that $\post^{*}_{\ASMS}(\hat{C^k}) \subseteq \pre^{*}_{\ASMS}(\uparrow q_f)$ we need
to show that from every $Z \in \post^{*}_\ASMS(\hat{C^k})$, we can reach a configuration $Z' \in \uparrow q_f$.
Let $Z \in \post^{*}_\ASMS(\hat{C^k})$. By remark 2, there exists 
a configuration $C'$ of $\RBN$ such that $Z$ can reach the configuration $\hat{C'}$ in $\ASMS$ and so $\hat{C^k}$ can reach $\hat{C'}$. By remark 1, we 
have that $C^k$ can reach $C'$ in $\RBN$ and so (by our assumption), from $C'$ it must be possible to reach some configuration $C'' \in \uparrow q_f$ in $\RBN$.
By remark 1, we then have that $\hat{C'}$ can reach $\hat{C''}$ in $\ASMS$. Since $\hat{C''}(q_f) = C''(q_f)$ it follows that $\hat{C''} \in \uparrow q_f$.
Hence, from the configuration $Z$ we have managed to reach a configuration in $\uparrow q_f$ in $\ASMS$, which is what we wanted to prove.

Suppose  $\post^{*}_{\ASMS}(\hat{C^k}) \subseteq \pre^{*}_{\ASMS}(\uparrow q_f)$. To prove that $\post^{*}_{\RBN}(C^k) \subseteq \pre^{*}_{\RBN}(\uparrow q_f)$ we need
to show that from every $C \in \post^*_\RBN(C^k)$, we can reach a configuration $C' \in \uparrow q_f$.
Let $C \in \post^*_\RBN(C^k)$. By remark 1, we have that $\hat{C^k}$ can reach $\hat{C}$ and so (by our assumption),
from $\hat{C}$ it must be possible to reach some configuration $Z \in \uparrow q_f$. By remark 2, there exists a configuration $C'$ of $\RBN$ such that $Z$ can reach the configuration $\hat{C'}$ 
and $\hat{C'}(q_f) \ge Z(q_f) \ge 1$. Hence, $\hat{C}$ can reach the configuration $\hat{C'}$ in $\ASMS$ and so by remark 1, $C$ can reach $C'$ in $\RBN$.
Since $\hat{C'}(q_f) = C'(q_f) \ge 1$, it follows that $C' \in \uparrow q_f$. Hence, from the configuration $C$ we have managed
to reach a configuration in $\uparrow q_f$ in $\RBN$, which is what we wanted to prove.

Now, by Theorem~\ref{thm:ICALPPatricia}, we know that every ASMS has either a positive or a negative cut-off and deciding whether a given ASMS has a positive cut-off is in \textsf{EXPSPACE}.
Combining this fact along with the argument given above, we can conclude that every RBN has either a
positive or a negative cut-off and deciding whether a given RBN has a positive cut-off is in \textsf{EXPSPACE}.

\end{proof}

\section{Appendix for Section \ref{sec:IO}}

\StrongSimAll*
\begin{proof}
The simulation construction from IO nets to RBN
constructs an RBN with same state set $Q$ as the IO net,
and maps a configuration $C$ over $Q$ to the same configuration.
The simulation construction from RBN to ASMS
constructs an ASMS with state set $Q' \supseteq Q$ for $Q$ the original state set.
Adapting notation, we can see a configuration $C=(M,d)$
of an ASMS as the multiset $M \cdot d$ over $Q' \cup \Sigma'$.
The construction maps a configuration $C$ over $Q$ to the configuration 
$C \cdot \vec{0}_{Q' \setminus Q} \cdot \#$ over $Q' \cup \Sigma'$,
where $\vec{0}_{Q' \setminus Q}$ is the zero multiset over $Q' \setminus Q$.
Finally, the simulation construction from ASMS to RBN
constructs an RBN with state set $Q' \supseteq (Q \cup \Sigma)$ for $Q$ the original state set.
The construction maps 
a configuration $C=M\cdot d$ over $Q\cup \Sigma$ to the configuration 
$C \cdot \vec{d} \cdot \vec{0}_{Q' \setminus (Q \cup \Sigma)}$ over $Q'$,
where $\vec{d}$ is the multiset over $\Sigma$ equal to $1$ on $d$ and $0$ elsewhere, 
and $\vec{0}_{Q' \setminus (Q \cup \Sigma)}$
is the zero multiset over $Q' \setminus (Q \cup \Sigma)$.
\end{proof}

\ThmCCReachRBN*
\begin{proof}
\newcommand{\image}{\mathcal{G}}
\newcommand{\complicated}{\mathcal{M}}
We start by a remark on the strong simulation definition:
The definition of strong simulation between an instance $I$ of model $A$ and
an instance $J$ of model $B$ 
entails a bijection $b$ from configurations of instance $I$ to 
a subset $\image$ of ``good" configurations of instance $J$.
The bijection verifies that a cube of $I$ is
mapped to a cube of $J$, and 
that a cube of $J$ 
restricted to configurations of $\image$
is mapped to a cube of $I$.
Intuitively, the image of a cube $\cube$ of $I$ 
is its ``concatenation" with the cube on $Q_J\setminus Q_I$ of lower and upper bound equal to $h$.
The norm of $b(\cube)$ is $\norm{b(\cube)}=\norm{\cube}+|h|$.
A cube $\cube$ of $J$ restricted to $\image$
is equal to the cube $\cube \cap \mathcal{H}$ where
$\mathcal{H}$ is the cube 
of lower bound $0$ and upper bound $\infty$ on $Q_I$, and upper and lower bounds equal to $h$ on $Q_J\setminus Q_I$. 
The reverse image of this cube is the cube of $I$ in which we ``forget" the information of $Q_J\setminus Q_I$. 
The norm of $b^{-1}(\cube|_\image)$ is $\norm{b^{-1}(\cube|_\image)} \le \norm{\cube}$.
\chana{L,U for the empty cube not defined yet} 

Now, it suffices to show the claim of the theorem when $\cSet$ is a cube, 
since for a counting set $\cup_{i=1}^n \cube_i$, 
we have $\poststar(\cup_i \cube_i)=\cup_i  \poststar(\cube_i)$.
Fix an RBN $\RBN=(Q,\Sigma,\delta)$ and a cube $\cube$ over $Q$.
Let $\net=(Q_\net, \delta_\net)$ be the IO net of the strong simulation whose existence we assume, and  let $b$ be the bijection induced by the simulation 
from configurations of $\RBN$ to a subset of ``good configurations" of $\net$.
Let us note $\image$ the image by $b$ of the configurations of $\RBN$.

Since $b$ preserves cubes, $b(\cube)$ is a cube.
By Theorem \ref{thm:ccreach-io}, $\poststar(b(\cube))$ is a counting set,
and thus there exist cubes $\cube_1, \ldots, \cube_n$ of $\net$ 
such that $\poststar(b(\cube))=\cup_{i=1}^n \cube_i$.
Let $\complicated$ be the set $\cup_{i=1}^n b^{-1}(\cube_i|_\image)$ of $\RBN$.
We show that $\poststar(\cube)=\complicated$.
Let $C\in \complicated$. 
There exists $i$ such that $C \in  b^{-1}(\cube_i|_\image)$.
Thus $b(C) \in  \cube_i|_\image \subseteq \poststar(b(\cube))$.
By strong simulation, 
$C \in \poststar(\cube)$.
For the other direction  of inclusion, consider 
$C\in  \poststar(\cube)$.
By strong simulation, $b(C) \in \poststar(b(\cube))$,
and there exists $i$ such that $b(C) \in \cube_i$.
By definition $b(C) \in \image$, so $b(C) \in  \cube_i|_\image$ 
and $C \in b^{-1}(\cube_i|_\image) \subseteq \complicated$,
concluding our proof of equality.

Since the $b^{-1}(\cube_i|_\image)$ are cubes by strong simulation,
$\poststar(\cube)$ is a counting set as a union of cubes.
The size of $\poststar(b(\cube))$ is polynomial in $\cube$ and $\RBN$
 by Theorem  \ref{thm:ccreach-io},
and thus the size of $\poststar(\cube)$ is too.
\end{proof}

\end{document}